\titleformat*{\section}{\large\bfseries}
\titleformat*{\subsection}{\it}
\newtheorem{thm}{Theorem}
\newtheorem{lem}{Lemma}
\def\ep{{\varepsilon}}
\def\th{{\theta}}
\def\phih{{\widehat \phi}}
\def\tht{{\widetilde \th}}
\def\tr{{\rm tr\,}}
\def\T{{ \mathrm{\scriptscriptstyle T} }}
\def\ep{{\varepsilon}}
\def\th{{\theta}}
\def\beh{{\hat \beta}}
\def\thh{{\hat \th}}
\def\phih{{\hat \phi}}
\def\Ah{\hat{A}}
\def\Mh{\hat{M}}
\def\Var{\text{var}}
\def\tht{\tilde{\th}}
\title{{\bf Robust Empirical Bayes Small Area Estimation with Density Power Divergence}}
\date{}
\author{}
\begin{document}

\maketitle
\doublespacing

\vspace{-1.7cm}
\begin{center}
{\large Shonosuke Sugasawa}

\medskip
Center for Spatial Information Science, The University of Tokyo
\end{center}

\medskip
\begin{center}
{\bf \large Abstract}
\end{center}

\vspace{-0cm}
A two-stage normal hierarchical model called the Fay--Herriot model and the empirical Bayes estimator are widely used to provide indirect and model-based estimates of means in small areas. However, the performance of the empirical Bayes estimator might be poor when the assumed normal distribution is misspecified. In this article, we propose a simple modification by using density power divergence and suggest a new robust empirical Bayes small area estimator. The mean squared error and estimated mean squared error of the proposed estimator are derived based on the asymptotic properties of the robust estimator of the model parameters. We investigate the numerical performance of the proposed method through simulations and an application to survey data.

\bigskip\noindent
{\bf Key words}: 
Density power divergence; empirical Bayes estimation; Fay--Herriot model

\newpage
\section{Introduction}
Direct survey estimators based only on area-specific sample data are known to yield unacceptably large standard errors if the area-specific sample sizes are small.
Empirical Bayes methods are widely used to improve direct survey estimators by shrinking toward some synthetic estimator and borrowing strength.
For comprehensive overviews of small area estimation, see \cite{Pfe2013} and \cite{Rao2015}.

A basic area-level model is a two-stage normal hierarchical model known as the Fay--Herriot model \citep{FH1979}, described as 
\begin{equation}\label{FH}
y_i|\th_i\sim N(\th_i,D_i), \hspace{0.5cm} \th_i\sim N(x_i^\T \beta,A) \hspace{0.5cm} (i=1,\ldots,m),
\end{equation}
where $y_i$ is the direct estimator of the small area mean $\th_i$, $D_i$ is the sampling variance, assumed to be known, $x_i$ and $\beta$ are vectors of the covariates and regression coefficients, respectively, and $A$ is an unknown variance. 
Let $\phi=(\beta^\T ,A)^\T $ be the unknown parameter vector in (\ref{FH}).
Since $y_i\sim N(x_i^\T \beta,A+D_i)$ under (\ref{FH}), $\phi$ can be estimated by maximizing the log-marginal likelihood
\begin{equation}\label{LL}
\log f(y; \phi)=-\frac{m}2\log(2\pi)-\frac12\sum_{i=1}^m\log(A+D_i)-\frac12\sum_{i=1}^m\frac{(y_i-x_i^\T \beta)^2}{A+D_i},
\end{equation}
with $y=(y_1,\ldots,y_m)^\T $.
The Bayes predictor of $\th_i$ under squared error loss is 
\begin{equation}\label{BE}
\tht_i(y_i;\phi)=y_i-\frac{D_i}{A+D_i}(y_i-x_i^\T \beta),
\end{equation}
and the empirical Bayes estimator of $\th_i$ is $\thh_i=\tht(y_i;\phih_{\alpha})$.

The empirical Bayes estimator is useful when $y_i$ can be well-explained by the auxiliary information $x_i$.
However, $x_i$ is not necessarily good auxiliary information for $y_i$ in all the areas, that is, $y_i$ could be very far from $x_i^t\beta$ in some areas, which we call outlying observations.
For such observations, the corresponding $\theta_i$ could be generated from a distribution different from the assumed one (\ref{FH}), that is, the assumed distribution of $\theta_i$ could be misspecified.
In this paper, we consider a situation where there exists such outlying observations and focus on the following two undesirable properties in this situation:

\begin{itemize}
\item[1.]
The Bayes predictor (\ref{BE}) might over-shrink outlying $y_i$ toward $x_i^\T \beta$.

\item[2.]
The estimator $\phih$ based on (\ref{LL}) would be highly influenced by outlying observations.
\end{itemize}

These problems have been addressed in studies such as \cite{FH1979} and \cite{Ghosh2008}, but we extend the body of knowledge on this topic by using density power divergence \citep{Basu1998}.

Our insight is based on an alternative expression for the Bayes predictor (\ref{BE}) using (\ref{LL}).
From Tweedie's formula \citep{Efron2011}, the Bayes predictor (\ref{BE}) can be written as
\begin{equation}\label{BE2}
\tht_i(y_i;\phi)=y_i+D_i\frac{\partial}{\partial y_i}\log f(y_i;\phi).
\end{equation}
The above expression holds as long as $y_i|\theta_i\sim N(\theta_i,D_i)$, that is, only the form of marginal likelihood $f(y_i;\phi)$ should be changed when the distribution of $\theta_i$ is not normal as in (\ref{FH}).
From (\ref{BE2}), one can see that the classical empirical Bayes estimator $\thh_i$ can be determined by the maximization and derivative of (\ref{LL}). 
We therefore suggest replacing the log-marginal likelihood with density power divergence, which includes Kullback--Leibler divergence as a special case.
Density power divergence under (\ref{FH}) has a closed form, and a new robust Bayes predictor has a simple form.
We also consider robust estimators of the model parameters and provide their asymptotic properties.
Moreover, we construct an estimator of the mean squared error of the robust empirical Bayes estimator based on the parametric bootstrap and provide its asymptotic validity.

Generalized likelihood including density power divergence has been used in Bayesian inference \citep{AG2013,GB2016,HV2014,Jewson2018,NH2019}, who address the misspecification of the assumed distribution of observations, but we deal with the misspecification of the assumed distribution of unobserved areal mean $\th_i$ in (\ref{FH}), which can be regarded as the prior distribution of $\th_i$.
Moreover, we consider frequentist inference for the model parameters in (\ref{FH}).

\cite{Ghosh2008} proposed a robust Bayes predictor in the Fay--Herriot model (\ref{FH}), using the influence function for $\beta$ to tackle Property 1, but did not address Property 2.
\cite{Sinha2009} proposed using \cite{Huber1973} $\psi$-function to derive a Bayes predictor and parameter estimators in general linear mixed models, which tackled Properties 1 and 2, but as demonstrated in the next section, the resulting Bayes predictor has limitations when aiming to compensate for Property 1.

\section{Density Power Divergence and Bayes Predictor}\label{sec:DPD}

\subsection{Density power divergence}
Although the maximum likelihood estimator minimizes empirical estimates of Kullback--Leibler distance, it is sensitive to distributional assumptions.
To overcome this, \cite{Basu1998} introduced an estimation method based on density power divergence for independently and identically distributed data.
As the observations $y_i$ following the Fay--Herriot model (\ref{FH}) are independent but not identically distributed, we consider the following function instead of the log-likelihood function (\ref{LL}) \citep{GB2013}, 
\begin{equation}\label{DPD}
L_{\alpha}(y; \phi)=\frac{1}{\alpha }\sum_{i=1}^m f_i(y_i;\phi)^{\alpha}-\frac{1}{1+\alpha}\sum_{i=1}^m  \int f_i(t;\phi)^{1+\alpha}dt, \ \ \ \  \alpha\in (0,1)
\end{equation} 
where $f_i(y_i;\phi)$ is the density of $y_i\sim N(x_i^\T \beta,A+D_i)$. 
Here, $\alpha$ is a tuning constant related to robustness.
Note that 
$$
\lim_{\alpha\to 0}\bigg\{L_{\alpha}(y; \phi)-m\bigg(\frac{1}{\alpha}-1\bigg)\bigg\}=\log f(y;\phi),
$$
so apart from an irrelevant constant (\ref{DPD}) is similar to the log-likelihood function when $\alpha\approx 0$.

Under model (\ref{FH}), the $y_i$s are independent and $y_i\sim N(x_i^\T \beta,A+D_i)$, so (\ref{DPD}) can be expressed as 
\begin{equation}\label{NDPD}
\begin{split}
L_{\alpha}(y; \phi)
&=\sum_{i=1}^m\left\{\frac{s_i(y_i;\phi)}{\alpha}-\frac{V_i^{\alpha}}{(1+\alpha)^{3/2}}\right\},
\end{split}
\end{equation}
where $V_i=\{2\pi(A+D_i)\}^{-1/2}$ and 
\begin{equation*}
s_i(y_i;\phi)=V_i^{\alpha}\exp\left\{-\frac{\alpha(y_i-x_i^\T \beta)^2}{2(A+D_i)}\right\}.
\end{equation*}
We propose using function (\ref{NDPD}) instead of the log-marginal likelihood $\log f(y;\phi)$.

\subsection{Robust Bayes predictor}
We define the robust Bayes predictor $\tht_i^{\rm R}$ of $\th_i$ by replacing $\log f(y;\phi)$ in (\ref{BE2}) with $L_{\alpha}(y; \phi)$.
Since
$$
\frac{\partial}{\partial y_i}L_{\alpha}(y; \phi)=\frac{\partial}{\partial y_i}\frac{s_i(y; \phi)}{\alpha}=\frac{1}{A+D_i}(y_i-x_i^\T \beta)s_i(y_i;\phi),
$$
the robust Bayes predictor is 
\begin{equation}\label{RBE}
\tht_i^{\rm R}=y_i-\frac{D_i}{A+D_i}(y_i-x_i^\T \beta)s_i(y_i;\phi).
\end{equation}
The shrinkage factor in (\ref{RBE}) is $s_i(y_i;\phi)D_i/(A+D_i)$, which depends on $y_i$, whereas the shrinkage factor in the classical Bayes predictor (\ref{BE}) is $D_i/(A+D_i)$, which does not depend on $y_i$.
Further, $\tht_i^{R}$ reduces to $\tht_i$ when $\alpha=0$ since $s_i(y_i;\phi)=1$ under $\alpha=0$.
Moreover, as $D_i\to 0$, the robust Bayes predictor $\tht_i^{\rm R}$ reduces to the direct estimator $y_i$ as the classical $\tht_i$ does.

\subsection{Comparison with related robust Bayes predictors}\label{sec:other}
For related robust Bayes predictors under model (\ref{FH}), \cite{Ghosh2008} proposed the predictor
\begin{equation}\label{GEB}
\tht_i^{\rm G}=y_i-\frac{D_iv_i(A)^{1/2}}{A+D_i}\psi_K\left\{\frac{y_i-x_i^\T \beh(A)}{v_i(A)^{1/2}}\right\},
\end{equation}
where 
$$
\beh(A)=\left(\sum_{i=1}^m\frac{x_ix_i^\T }{A+D_i}\right)^{-1}\left(\sum_{i=1}^m\frac{x_iy_i}{A+D_i}\right), 
\ \ \ \ 
v_i(A)=A+D_i-x_i^\T \left(\sum_{i=1}^m\frac{x_ix_i^\T }{A+D_i}\right)^{-1}x_i,
$$ 
and $\psi_K(t)=u\min(1,K/|u|)$ is Huber's $\psi$-function with a tuning constant $K>0$ that has a similar role to $\alpha$.
Similarly, \cite{Sinha2009} used Huber's $\psi$-function to modify an equation for $\th_i$ and suggested a robust predictor $\tht_i^{\rm SR}$ as a solution to the equation
\begin{equation}\label{SR}
D_i^{-1/2}\psi_K\Big\{D_i^{-1/2}(y_i-\tht_i^{\rm SR})\Big\}-A^{-1/2}\psi_K\Big\{A^{-1/2}(\tht_i^{\rm SR}-x_i^\T \beta)\Big\}=0.
\end{equation}

Consider observation $y_i$ which is very different from the grand (prior) mean $x_i^\T\beta$.
For such observation, the auxiliary information $x_i$ would not be useful to improve the direct estimator $y_i$ through the model (\ref{FH}), so that it would be better to keep $y_i$ unshrunk.
To see such shrinkage property, it would be useful to check the behavior of a Bayes predictor $\eta_i$ under large $|y_i-x_i^t\beta|$.
Specifically, we consider $|\eta_i-y_i|$ as $|y_i-x_i^t\beta|\to\infty$ with fixed values of the parameters and $D_i$. 
Ideally, $|\eta_i-y_i|\to 0$, which means that the Bayes predictor $\eta_i$ does not shrink $y_i$ under large $|y_i-x_i^t\beta|$.
This property was addressed in the context of small area estimation \citep{Datta1995} as well as signal estimation \citep{Cal2010}.
For the classical Bayes predictor $\tht_i$ in (\ref{BE}), $|\tht_i-y_i|\to\infty$ as $|y_i-x_i^t\beta|\to\infty$, meaning that over-shrinkage occurs.
For $\tht_i^{G}$, $|\tht_i^{\rm G}-y_i|\to KD_iv_i(A)^{1/2}/(A+D_i)$.
Moreover, if $|y_i-\tht_i^{\rm SR}|\to 0$ as $|y_i-x_i^t\beta|\to\infty$, the left-hand side of (\ref{SR}) reduces to $-A^{-1/2}K$, so $|y_i-\tht_i^{\rm SR}|\nrightarrow 0$.
On the contrary, for the proposed robust Bayes predictor $\tht_i^{\rm R}$ in (\ref{RBE}), $|\tht_i^{\rm R}-y_i|\to 0$ as $|y_i-x_i^t\beta|\to\infty$ holds when $\alpha>0$, since $(y_i-x_i^\T \beta)s_i(y_i;\phi)\to 0$ as $|y_i-x_i^t\beta|\to \infty$.

When there are no random effects, that is, $A=0$, the conventional Bayes predictor (\ref{BE}) reduces to $x_i^\T \beta$.
However, the robust Bayes predictors, $\tht_i^{\rm R}$, $\tht_i^{\rm G}$, and $\tht_i^{\rm SR}$, do not have the property, which might be a drawback as a compensation for robustness.

\section{Robust Empirical Bayes Estimator and Mean Squared Error}\label{sec:REB}

\subsection{Robust parameter estimation}
We define the robust estimator $\phih_{\alpha}$ of $\phi$ as $\phih_{\alpha}=\text{argmax} \ L_{\alpha}(y;\phi)$, where $L_{\alpha}(y;\phi)$ is given in (\ref{NDPD}). 
Then, the robust estimator $\phih_{\alpha}$ satisfies
\begin{equation}\label{EE}
\begin{split}
\frac{\partial L_{\alpha}}{\partial \beta}=
&\sum_{i=1}^m\frac{x_is_i(y_i;\phi)(y_i-x_i^\T \beta)}{A+D_i}=0,\\
2\frac{\partial L_{\alpha}}{\partial A}=
&\sum_{i=1}^m\left\{\frac{(y_i-x_i^\T \beta)^2s_i(y_i;\phi)}{(A+D_i)^2}-\frac{s_i(y_i;\phi)}{A+D_i}+\frac{\alpha V_i^{\alpha}}{(\alpha+1)^{3/2}(A+D_i)}\right\}=0.
\end{split}
\end{equation}
We adopt a Newton--Raphson algorithm for solving these estimating equations, where derivatives are given in the proof of Theorem \ref{thm:asymp} in the Supplementary Material.
A reasonable starting point would be the maximum likelihood estimates.
By substituting the robust estimator $\phih_{\alpha}$ into the robust Bayes predictor (\ref{RBE}), we obtain the robust empirical Bayes estimator $\thh_i^{\rm R}=\tht_i^{\rm R}(y_i,\phih_{\alpha})$.

\subsection{Selection of tuning parameter}\label{sec:selection}
The parameter $\alpha$ is related to robustness but is not easy to interpret.
Following \cite{Ghosh2008}, we consider selection of $\alpha$ based on the mean squared error of the robust Bayes predictor (\ref{RBE}), which enables us to specify $\alpha$ in an interpretable way.
The mean squared error formula is given in the following theorem.

\begin{thm}\label{thm:BEMSE}
Under model (\ref{FH}), $E\{(\tht_i^{\rm R}-\th_i)^2\}=g_{1i}(A)+g_{2i}(A)$, where 
$$
g_{1i}(A)=\frac{AD_i}{A+D_i},    \hspace{0.5cm}
g_{2i}(A)=\frac{D_i^2}{A+D_i}\left\{\frac{V_i^{2\alpha}}{(2\alpha+1)^{3/2}}-\frac{2V_i^{\alpha}}{(\alpha+1)^{3/2}}+1\right\}
$$
and $g_{2i}(A)$ is increasing in $\alpha$.
\end{thm}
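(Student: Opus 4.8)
The plan is to reduce the whole computation to moments of the single Gaussian variable $z_i := y_i - \x_i^t\bbe \sim N(0,A+D_i)$ that is induced by the marginal in (\ref{FH}). Writing $b_i := D_i/(A+D_i)$ and $\ep_i := y_i - \th_i$, and noting that the factor $s_i$ in (\ref{si}) is a function of $z_i$ alone, I would split the estimation error of (\ref{RBE}) as
\[
\tht_i^R - \th_i = \ep_i - b_i z_i s_i = (\ep_i - b_i z_i) - b_i z_i(s_i-1),
\]
where the first summand is exactly the error of the classical Bayes estimator (\ref{BE}) and the second is measurable with respect to $z_i$. The key simplification is that the cross term vanishes: since $(\ep_i,z_i)$ is jointly Gaussian with $\E[\ep_i\mid z_i] = b_i z_i$, we have $\E[\ep_i - b_i z_i \mid z_i] = 0$, hence $\E[(\ep_i - b_i z_i)\,g(z_i)] = 0$ for every function $g$. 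Choosing $g(z_i) = -b_i z_i(s_i-1)$ leaves $\E[(\tht_i^R - \th_i)^2] = \E[(\ep_i - b_i z_i)^2] + b_i^2\,\E[z_i^2(s_i-1)^2]$, and a one-line variance calculation gives $\E[(\ep_i - b_i z_i)^2] = AD_i/(A+D_i) = g_{1i}(A)$.

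It then remains to evaluate $b_i^2\,\E[z_i^2(s_i-1)^2]$. Expanding $(s_i-1)^2 = s_i^2 - 2s_i + 1$ reduces this to the two Gaussian moments $\E[z_i^2 s_i]$ and $\E[z_i^2 s_i^2]$, both of which follow from the elementary identity $\E[z_i^2 e^{-c z_i^2}] = (A+D_i)\{1 + 2c(A+D_i)\}^{-3/2}$ for $c\ge 0$. Applying it with $c = \alpha/\{2(A+D_i)\}$ and $c = \alpha/(A+D_i)$ produces the denominators $(1+\alpha)^{3/2}$ and $(1+2\alpha)^{3/2}$, while the prefactors $V_i^\alpha$ and $V_i^{2\alpha}$ come from the leading constant in $s_i$ and $s_i^2$. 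Using $\E[z_i^2] = A+D_i$ and multiplying through by $b_i^2 = D_i^2/(A+D_i)^2$ then reproduces $g_{2i}(A)$ exactly.

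For the monotonicity statement I would write $g_{2i}(A) = \{D_i^2/(A+D_i)\}\,h(\alpha)$ with $h(\alpha) = 1 - 2q(\alpha) + q(2\alpha)$, where $q(\beta) := V_i^\beta(1+\beta)^{-3/2}$, so that $q(2\alpha)$ is precisely the $V_i^{2\alpha}/(2\alpha+1)^{3/2}$ term. Since the prefactor is positive and independent of $\alpha$, it suffices to show $h'(\alpha)\ge 0$. Differentiating gives $h'(\alpha) = 2\{q'(2\alpha) - q'(\alpha)\}$, so the claim reduces to $q'$ being nondecreasing, i.e.\ to convexity of $q$. The clean way to obtain this is log-convexity: $\log q(\beta) = \beta\log V_i - \tfrac32\log(1+\beta)$ has second derivative $3/\{2(1+\beta)^2\} > 0$, so $q$ is log-convex and hence convex, and because $2\alpha\ge\alpha$ convexity yields $q'(2\alpha)\ge q'(\alpha)$ and therefore $h'(\alpha)\ge 0$.

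I expect the monotonicity to be the only genuinely delicate point. The explicit form of $h$ mixes the exponential factor $V_i^\alpha$ with negative powers of $1+\alpha$, and since $\log V_i = -\tfrac12\log\{2\pi(A+D_i)\}$ may take either sign, a direct term-by-term derivative argument is inconclusive. Recognizing $h$ as the second difference $1 - 2q(\alpha) + q(2\alpha)$ of a single log-convex function $q$ is exactly what makes the sign of $h'$ transparent, and this observation is what I would rely on.
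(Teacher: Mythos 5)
Your proposal is correct. The MSE computation itself follows essentially the same route as the paper: you use the orthogonality coming from $\tht_i=\E[\th_i\mid y_i]$ (phrased via the joint Gaussianity of $(\ep_i,z_i)$ and $\E[\ep_i\mid z_i]=b_iz_i$, which is the same fact) to kill the cross term, identify $g_{1i}$ as the classical Bayes risk, and reduce $g_{2i}$ to the Gaussian moments $\E[z_i^2s_i^k]$, $k=0,1,2$ --- your identity $\E[z_i^2e^{-cz_i^2}]=(A+D_i)\{1+2c(A+D_i)\}^{-3/2}$ is exactly the $j=1$ case of the paper's Lemma A1. Where you genuinely diverge is the monotonicity claim. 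The paper differentiates under the expectation in $g_{2i}(A)=b_i^2\E[z_i^2(1-s_i)^2]$ and argues pointwise that $(1-s_i)\partial s_i/\partial\alpha\le 0$ because $s_i=f_i(y_i;\bphi)^{\alpha}$, splitting on whether $f_i\le 1$ or $f_i\ge 1$; this shows the integrand itself is nondecreasing in $\alpha$ and never touches the closed form. You instead work on the final expression, recognizing $h(\alpha)=q(0)-2q(\alpha)+q(2\alpha)$ as a second difference of the log-convex (hence convex) function $q(\beta)=V_i^{\beta}(1+\beta)^{-3/2}$, so that $h'(\alpha)=2\{q'(2\alpha)-q'(\alpha)\}\ge 0$. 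Both arguments are valid: the paper's is more structural (it would survive in settings where the closed form is unavailable, and it explains the monotonicity probabilistically), while yours is a self-contained calculus argument that correctly sidesteps the sign ambiguity of $\log V_i$ and, as a free by-product, gives $g_{2i}(A)\ge 0$ directly from convexity of $q$.
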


\medskip
The mean squared error of the classical Bayes predictor (\ref{BE}) corresponds to $g_{1i}(A)$, so the excess mean squared error of $\tht_i^{\rm R}$ over $\tht_i$ is $g_{2i}(A)$, which approaches $0$ when $\alpha=0$.
Therefore, there is a trade-off between the robustness of $\tht_i^{\rm R}$ and the mean squared error evaluated under model (\ref{FH}).
We define $\text{Ex}(\alpha)=100\times\sum_{i=1}^mg_{2i}(\Ah_{\alpha})/\sum_{i=1}^mg_{1i}(\Ah_{\alpha})$ as the percentage relative excess mean squared errors, where $\Ah_{\alpha}$ is the robust estimate of $A$ from (\ref{EE}) under given $\alpha$.
We propose selecting $\alpha$ such that $\text{Ex}(\alpha)$ does not exceed a user-specified percentage $c\%$; in other words, we compute $\alpha^{\ast}$ to satisfy $\text{Ex}(\alpha^{\ast})=c$.
This has a unique solution since $\text{Ex}(\alpha)$ increases in $\alpha$ from Theorem \ref{thm:BEMSE}.
We adopt the bisectional method \citep[\S 2]{Burden2010} to compute $\alpha^{\ast}$.
In practice, we first compute $\alpha^{\ast}$ for a specified value of $c$, and all the estimation procedures are conducted with $\alpha=\alpha^{\ast}$.
For theoretical simplicity, we assume that $\alpha$ is known in \S \ref{sec:asymp}, \ref{sec:MSE1} and \ref{sec:MSE2}, but the selection procedure is used in all the numerical examples given in \S \ref{sec:num} to investigate its possible effect.

\subsection{Asymptotic properties of the robust estimators}\label{sec:asymp}
We consider the asymptotic properties of the robust estimator under model (\ref{FH}).
To this end, we assume the regularity conditions:

\begin{itemize}
\item[1.]
$0<D_{\ast}\leq \min_{1\leq i\leq m}D_i\leq \max_{1\leq i\leq m}D_i\leq D^{\ast}<\infty$, where $D_{\ast}$ and $D^{\ast}$ do not depend on $m$;
\item[2.]
$\max_{1\leq i\leq m}x_i^\T (X^\T X)^{-1}x_i=O(m^{-1})$, where $X=(x_1,\ldots,x_m)^\T $;

\item[3.]
$X^\T X/m$ converges to a positive definite matrix as $m\to\infty$.
\end{itemize}

Similar conditions are used by \cite{Prasad1990}.
Since the derivatives in (\ref{EE}) have zero expectations under model (\ref{FH}), we obtain the following result.

\begin{thm}\label{thm:asymp}
Under Conditions 1--3, $\beh_{\alpha}$ and $\Ah_{\alpha}$ are asymptotically independent and distributed as $N(\beta,m^{-1} J_\beta^{-1}K_\beta J_\beta^{-1})$ and $N(A,K_A/mJ_A^2)$, respectively, where 
\begin{equation*}\label{AV}
\begin{split}
& J_\beta=\frac{1}{m(\alpha+1)^{3/2}}\sum_{i=1}^m\frac{V_i^{\alpha}x_ix_i^\T }{A+D_i}, \ \ \ \ \ 
J_A=\frac{1}{2m}\sum_{i=1}^m\frac{V_i^{\alpha}(\alpha^2+2)}{(A+D_i)^2(\alpha+1)^{5/2}},\\
&K_\beta=\frac{1}{m(2\alpha+1)^{3/2}}\sum_{i=1}^m\frac{V_i^{2\alpha}x_ix_i^\T }{A+D_i}, \ \ \ 
K_A=\frac{1}{m}\sum_{i=1}^m\frac{V_i^{2\alpha}}{(A+D_i)^2}\left\{\frac{2(2\alpha^2+1)}{(2\alpha+1)^{5/2}}-\frac{\alpha^2}{(\alpha+1)^3}\right\}.
\end{split}
\end{equation*}
\end{thm}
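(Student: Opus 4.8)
The plan is to treat $\bphih$ as an M-estimator defined through the estimating equations (\ref{EE}) and to obtain its limiting law by the standard linearization (sandwich) argument for sums of independent, non-identically distributed summands. Write $\Psi_m(\bphi)=(\partial L_\alpha/\partial\bbe^t,\ \partial L_\alpha/\partial A)^t$ for the score vector associated with the robust likelihood, so that $\Psi_m(\bphih)=\0$. Since the estimating equations are unbiased under (\ref{FH}), one has $\E[\Psi_m(\bphi)]=\0$ at the true parameter, and a Taylor expansion of $\Psi_m(\bphih)$ about $\bphi$ gives
$$
\0=\Psi_m(\bphi)+\left(\frac{\partial}{\partial\bphi^t}\Psi_m(\bphi)\right)(\bphih-\bphi)+\text{(remainder)},
$$
whence $\sqrt{m}(\bphih-\bphi)\approx \J^{-1}\,m^{-1/2}\Psi_m(\bphi)$ with $\J=-m^{-1}\E[\partial\Psi_m/\partial\bphi^t]$ the mean sensitivity matrix. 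The two ingredients are therefore the matrix $\J$ and the variance $\K=m^{-1}\Var[\Psi_m(\bphi)]$, and the limiting covariance will be the sandwich $m^{-1}\J^{-1}\K\J^{-1}$.

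First I would compute $\J$ and $\K$ explicitly. Writing $z_i=(y_i-\x_i^t\bbe)/\sqrt{A+D_i}\sim N(0,1)$ and $s_i=V_i^\alpha e^{-\alpha z_i^2/2}$, every entry reduces to a Gaussian moment of the form $\E[z_i^k e^{-tz_i^2/2}]$; the identities $\E[e^{-tz^2/2}]=(1+t)^{-1/2}$, $\E[z^2e^{-tz^2/2}]=(1+t)^{-3/2}$ and $\E[z^4e^{-tz^2/2}]=3(1+t)^{-5/2}$ (for $t>-1$) produce exactly the powers of $(\alpha+1)$ and $(2\alpha+1)$ in the statement. Differentiating (\ref{EE}) and taking expectations gives the sensitivity blocks $\J_\bbe$ and $J_A$, where the single power $V_i^\alpha$ appears because each second-derivative term carries one factor $s_i$ and the Gaussian weight is evaluated at $t=\alpha$; squaring the score components and subtracting squared means gives the variances $\K_\bbe$ and $K_A$, where the double power $V_i^{2\alpha}$ arises from $s_i^2$ (so $t=2\alpha$), and the correction $-\alpha^2/(\alpha+1)^3$ in $K_A$ is precisely the subtracted squared mean of the $A$-score summand. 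The crucial simplification is a parity argument: the $\bbe$-score is an odd function of each $z_i$ while the $A$-score is even, so every cross expectation over the symmetric $N(0,1)$ law vanishes. Consequently both $\J$ and $\K$ are block diagonal, and the linearization yields the stated covariances $m^{-1}\J_\bbe^{-1}\K_\bbe\J_\bbe^{-1}$ for $\bbeh$ and $K_A/(mJ_A^2)$ for $\Ah$, together with their asymptotic independence.

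The remaining analytic work is to justify the linearization and to apply a central limit theorem to $m^{-1/2}\Psi_m(\bphi)$, a sum of independent but non-identically distributed terms. Here (C1) keeps $A+D_i$, and hence $V_i$, uniformly bounded above and below, so the summands possess uniformly bounded moments of all orders and a Lyapunov condition is immediate; (C2) controls the design so that $\J_\bbe$ converges to a nonsingular limit and the leverage of each $\x_i$ is $O(m^{-1})$. Together these give $m^{-1/2}\Psi_m(\bphi)\to N(\0,\K)$ and the convergence of $\J$ to a nonsingular matrix, after which Slutsky's theorem combines the two to deliver the joint normal limit with block-diagonal covariance. The main obstacle is the usual one for M-estimators: establishing consistency of $\bphih$ and then showing the second-order remainder in the expansion is $o_p(m^{-1/2})$ uniformly on a neighborhood of $\bphi$, so that the linear approximation is legitimate. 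This requires a uniform bound on the third derivatives of the summands, which again follows from (C1)--(C2) since those derivatives are polynomials in $z_i$ times the bounded weight $e^{-\alpha z_i^2/2}$, with coefficients controlled by $1/(A+D_i)$ and by the bounded leverage of $\x_i$.
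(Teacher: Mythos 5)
Your proposal is correct and follows essentially the same route as the paper: the paper invokes the theory of unbiased estimating equations (Godambe, 1960) to get the sandwich covariance $\J^{-1}\K\J^{-1}$, computes the expected second derivatives and score outer products via the Gaussian moment identities of its Lemma A1 (equivalent to your $\E[z^ke^{-tz^2/2}]$ formulas), and obtains block diagonality from the vanishing of odd moments, exactly as in your parity argument. Your additional remarks on the Lyapunov condition and the remainder control under (C1)--(C2) only make explicit what the paper leaves to the cited reference.
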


\vspace{0.2cm}
When $\alpha=0$, the asymptotic covariance matrix of $\beh_{\alpha}$ and the asymptotic variance of $\Ah_{\alpha}$ reduce to the asymptotic covariance matrix and variance of the maximum likelihood estimator of $\beta$ and $A$ given by \cite{Datta2000}, because (\ref{NDPD}) reduces to the log-likelihood (\ref{LL}).

\subsection{Mean squared error of the robust empirical Bayes estimator}\label{sec:MSE1}
To evaluate the risk of the estimator $\thh_i^{\rm R}$, we consider the mean squared error, $M_i=E\{(\thh_i^{\rm R}-\th_i)^2\}$, where the expectation is taken with respect to the joint distribution of the $\th_i$s and $y_i$s following the assumed model (\ref{FH}). 
The mean squared error can be regarded as the integrated Bayes risk, and is a standard measure of risk in small area estimation \citep{Rao2015}.

Since $\thh_i^{\rm R}$ depends on the estimator $\phih_{\alpha}$, the mean squared error $M_i$ takes account of the additional variability due to $\phih_{\alpha}$.
Therefore, it is difficult to evaluate $M_i$ analytically, and a second-order approximation of $M_i$ has been used. 
Following this convention, we provide an approximation for $M_i$ in the following theorem.

\begin{thm}\label{thm:mse}
Under Conditions 1--3, 
\begin{equation}\label{mse}
M_i=g_{1i}(A)+g_{2i}(A)+\frac{g_{3i}(A)}{m}+\frac{g_{4i}(A)}{m}+\frac{2g_{5i}(A)}{m}+o(m^{-1}),
\end{equation} 
where $g_{1i}(A)$ and $g_{2i}(A)$ are given in Theorem \ref{thm:BEMSE}, and 
\begin{align*}
g_{3i}(A)&=\frac{D_i^2V_i^{2\alpha}}{B_i^2(2\alpha+1)^{3/2}}x_i^\T  J_{\beta}^{-1}K_{\beta} J_{\beta}^{-1}x_i, \ \ \ \ \ 
g_{4i}(A)=\frac{D_i^2V_i^{2\alpha}K_A}{B_i^3(2\alpha+1)^{7/2}J_A^2}\Big(\alpha^4-\frac12\alpha^2+1\Big),\\
g_{5i}(A)&
=\frac{\alpha D_i^2x_i^\T  J_{\beta}^{-1}K_{\beta} J_{\beta}^{-1}x_i}{2B_i^4}(3B_iC_{11}-\alpha C_{21})
+\frac{D_i^2K_A}{24B_i^6J_A^2}\Big\{3\alpha B_i^2C_{21}+(\alpha-2)(3\alpha+8)C_{11}\Big\}\\
&\ \ \ \ +\frac{D_i^2x_i^\T  J_{\beta}^{-1}x_i}{B_i^4}(B_iC_{12}-\alpha C_{22})
+\frac{D_i^2}{2B_i^6J_A}\Big\{\alpha C_{32}-2B_iC_{22}+(2-\alpha)B_i^2C_{12}\Big\}\\
&\ \ \ \ +\frac{D_i^2}{2B_i^4}\Big\{b_A-\frac{\alpha V_i^{\alpha}}{(\alpha+1)^{3/2}B_iJ_A}\Big\}\Big\{(2-\alpha)B_iC_{11}-\alpha C_{21}\Big\}.
\end{align*}
Here, $B_i=A+D_i$, $b_A=\lim_{m\to\infty}mE(\Ah_{\alpha}-A)$ is the first-order bias of $\Ah_{\alpha}$ and
$$
C_{jk}=(2j-1)!!B_i^j\left\{V_i^{k\alpha}(k\alpha+1)^{-j-1/2}-V_i^{k\alpha+\alpha}(k\alpha+\alpha+1)^{-j-1/2}\right\},
$$
where $(2j-1)!!=(2j-1)(2j-3)\cdots(1)$.
\end{thm}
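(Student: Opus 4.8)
The plan is to carry out the standard second-order expansion of the integrated MSE, modified to account for the nonlinearity that the robustness weight $s_i$ introduces. Writing $\tht_i^R=\tht_i^R(y_i,\bphi)$ for the robust Bayes estimator at the true parameter, I would first split
\[
\thh_i^R-\th_i=(\tht_i^R-\th_i)+(\thh_i^R-\tht_i^R),
\]
and expand the square, so that $\E[(\thh_i^R-\th_i)^2]$ is the sum of $\E[(\tht_i^R-\th_i)^2]$, the estimation-variance term $\E[(\thh_i^R-\tht_i^R)^2]$, and twice the cross term $\E[(\tht_i^R-\th_i)(\thh_i^R-\tht_i^R)]$. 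Theorem \ref{thm:BEMSE} evaluates the first term exactly as $g_{1i}(A)+g_{2i}(A)$. For the cross term I would integrate out $\th_i$ first: since $\thh_i^R$ and $\tht_i^R$ depend on the data only through $y$ and, by independence of the areas, $\E[\th_i\mid y]=\E[\th_i\mid y_i]=\tht_i$, the cross term equals $\E[(\tht_i^R-\tht_i)(\thh_i^R-\tht_i^R)]$. This is the decisive structural point: in the classical linear case $\tht_i^R=\tht_i$ and the cross term vanishes, whereas here $\tht_i^R-\tht_i=-D_iB_i^{-1}(y_i-\x_i^t\bbe)(s_i-1)\neq 0$, so the cross term persists and is precisely the origin of the additional $2g_{5i}(A)/m$.

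Next I would develop a stochastic expansion of $\bphih-\bphi$ to second order from the estimating equation (\ref{EE}), reusing the analysis behind Theorem \ref{thm:asymp}. At first order $\bphih-\bphi$ is a normalized sum of per-area scores with block-diagonal covariance $\diag(m^{-1}\J_\beta^{-1}\K_\beta\J_\beta^{-1},\,K_A/mJ_A^2)$, $\bbeh$ and $\Ah$ being asymptotically independent; at second order the only surviving $O(m^{-1})$ bias is $\E[\Ah-A]=b_A/m+o(m^{-1})$, the $\bbe$-equation being odd in the residuals so that $\bbeh$ carries no bias at this order. I would then Taylor-expand $\thh_i^R-\tht_i^R=\tht_i^R(y_i,\bphih)-\tht_i^R(y_i,\bphi)$ to second order in $\bphih-\bphi$, which needs the first and second partial derivatives of $\tht_i^R$ with respect to $\bbe$ and $A$; these are explicit functions of $s_i$ and $z_i=(y_i-\x_i^t\bbe)/\sqrt{B_i}$.

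Collecting the $O(m^{-1})$ terms, the variance term and the cross term jointly yield $g_{3i}(A)/m$, $g_{4i}(A)/m$ and $2g_{5i}(A)/m$. Here $g_{3i}$ and $g_{4i}$ gather the leading variance contributions of $\bbeh$ and of $\Ah$, which decouple thanks to the asymptotic independence in Theorem \ref{thm:asymp}, while $g_{5i}$ collects the remaining $O(m^{-1})$ pieces, organized by their source: the second-order Taylor term weighted by $\Cov(\bphih)$ supplies the double inverses $\J_\beta^{-1}\K_\beta\J_\beta^{-1}$ and $K_A/J_A^2$; the first-order cross-correlation between $\tht_i^R-\tht_i$ and the score of area $i$ inside $\bphih-\bphi$ supplies the single inverses $\J_\beta^{-1}$ and $1/J_A$; and the product of the first-order bias $b_A$ of $\Ah$ with $\E[(\tht_i^R-\tht_i)\,\partial\tht_i^R/\partial A]$ gives the $b_A$ term. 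A simplification throughout is that every such expectation reduces to a Gaussian moment under $y_i\sim N(\x_i^t\bbe,B_i)$; because the derivatives of $\tht_i^R$ carry their own $y_i$-dependence through $s_i$, these moments appear in the difference form $C_{jk}=\E[(y_i-\x_i^t\bbe)^{2j}(s_i^{k}-s_i^{k+1})]$, the factor $s_i-1$ coming from $\tht_i^R-\tht_i$ and the $s_i^{k}$ from the Taylor coefficients.

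The main obstacle is the $O(m^{-1})$ bookkeeping of the cross term. Unlike the classical Fay--Herriot MSE, where conditioning on $y$ annihilates it, one must here (i) isolate within $\bphih-\bphi$ the single summand attached to area $i$ and compute its covariance with $\tht_i^R-\tht_i$, taking care not to double-count the $O(m^{-1})$ mass already placed in $\Cov(\bphih)$ by the variance term, and (ii) propagate the second-order bias $b_A$ through the nonlinear map $\bphi\mapsto\tht_i^R$. Keeping these two effects disentangled, and then discharging the many Gaussian moment evaluations that consolidate into the compact constants $C_{jk}$ and the polynomial factors such as $\alpha^4-\tfrac12\alpha^2+1$, is where essentially all the effort lies; the order control needed to absorb the remainder into $o(m^{-1})$ follows routinely from (C1)--(C2) along the lines of Prasad and Rao (1990).
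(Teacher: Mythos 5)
Your proposal follows essentially the same route as the paper's proof: the same three-term decomposition, the same reduction of the cross term to $\E[(\tht_i^R-\tht_i)(\thh_i^R-\tht_i^R)]$ via $\tht_i=\E[\th_i|y_i]$, the same second-order Taylor expansion of $\thh_i^R-\tht_i^R$ with the area-$i$ score isolated inside $\bphih-\bphi$ (the paper does this via the conditional expansions $\E[\bphih-\bphi\,|\,y_i]$ of Lohr and Rao, 2009) and the bias $b_A$ propagated through $\partial\tht_i^R/\partial A$, and the same reduction of all expectations to the Gaussian moments $C_{jk}=\E[u_i^{2j}s_i^k(1-s_i)]$ of Lemma \ref{lem:moment}. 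The only cosmetic difference is that you assert $\bbeh$ is unbiased at order $m^{-1}$, whereas the paper keeps $\b_\beta$ and lets its contribution vanish by odd symmetry of the relevant Gaussian moment; the conclusion is the same.
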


\vspace{0.3cm}
The derivation is given in the Supplementary Material.
It should be noted that the approximation formula (\ref{mse}) is based on known $\alpha$, so it might be different under estimated (selected) $\alpha$.  
The approximation formula (\ref{mse}) reduces to the mean squared error of the classical empirical Bayes estimator given by \cite{Datta2000} and \cite{Datta2005} since $C_{jk}=0$ and $g_{5i}(A)=0$ under $\alpha=0$.

\subsection{Estimation of the mean squared error}\label{sec:MSE2}
Because the approximation of the mean squared error given in Theorem \ref{thm:mse} depends on the unknown parameter $A$, it cannot be used in practice. 
We use a second-order unbiased estimator of the mean squared error.
An estimator $\hat{T}$ is called second-order unbiased if $E(\hat{T})=T+o(m^{-1})$. 
As shown in Theorem \ref{thm:mse}, $g_{3i}(A), g_{4i}(A)$, and $g_{5i}(A)$ are smooth functions, so $g_{3i}(\Ah_{\alpha}), g_{4i}(\Ah_{\alpha})$, and $g_{5i}(\Ah_{\alpha})$ are second-order unbiased.
On the contrary, $g_{1i}(\Ah_{\alpha})$ and $g_{2i}(\Ah_{\alpha})$ may have considerable bias, since $g_{1i}(A)$ and $g_{2i}(A)$ are $O(1)$.
Since the derivation of these biases and bias-corrected estimators of these terms require tedious algebra, we use the parametric bootstrap method in a similar way to \cite{Butar2003}.
We define the bootstrap estimator 
\begin{equation}\label{bMSE}
\Mh_i=2g_{12i}(\Ah_{\alpha})-\frac1B\sum_{b=1}^Bg_{12i}(\Ah_{\alpha}^{(b)})+\frac{g_{3i}(\Ah_{\alpha})}{m}+\frac{g_{4i}(\Ah_{\alpha})}{m}+\frac{2g_{5i}(\Ah_{\alpha})}{m},
\end{equation}
where $g_{12i}(A)=g_{1i}(A)+g_{i2}(A)$ and $\Ah_{\alpha}^{(b)}$ is the bootstrap estimator based on the parametric bootstrap samples $y_1^{(b)},\ldots,y_m^{(b)}$ generated from 
$$
y_i^{(b)}=x_i^\T \beh_{\alpha}+v_i^{(b)}+\ep_i^{(b)}, \ \ \ \ v_i^{(b)}\sim N(0,\Ah_{\alpha}), \ \ \ \ep_i^{(b)}\sim N(0,D_i).
$$
Following \cite{CH2015}, we obtain the following theorem.

\begin{thm}\label{thm:bMSE}
Assume Conditions 1--3 and let $\Mh_i^{\dagger}$ be the ideal version of $\Mh_i$ obtained by taking $B=\infty$.
Define $\Mh_i=\Mh_i^{\dagger}+U_i$, where $U_i$ denotes an error term arising from doing only a finite number of bootstrap replications.
Then, $E(\Mh_i^{\dagger})-M_i=o(m^{-1})$ and $U_i=O_p\{(mB)^{-1/2}\}$.
\end{thm}

From Theorem \ref{thm:bMSE}, the ideal version of the estimator (\ref{bMSE}), $\Mh_i^{\dagger}$, is second-order unbiased.
Moreover, Theorem \ref{thm:bMSE} implies that if $B=O(m^{1+\delta})$ for some small $\delta>0$, the error from the finite numbers of bootstrap iterations is $o_p(m^{-1})$, and thus the estimator $\Mh_i$ would perform similarly to the ideal estimator $\Mh_i^{\dagger}$.
However, the theoretical result is based on known $\alpha$, so that the bootstrap estimator (\ref{bMSE}) is not necessarily justified under estimated $\alpha$.

Although we adopt an additive form of bias correction in (\ref{bMSE}) following \cite{Butar2003}, other forms of bias correction have been proposed, such as those of \cite{Hall2006}.
Concerning $g_{5i}(\Ah_{\alpha})$, we must compute the estimate of $b_A$, the first-order bias of $\Ah_{\alpha}$, which can be calculated from the parametric bootstrap samples.
Alternatively, we may use a parametric bootstrap to compute $g_{5i}(\Ah_{\alpha})$.
As shown in the proof of Theorem \ref{thm:mse}, $E\{(\thh_i^{\rm R}-\tht_i^{\rm R})(\tht_i^{\rm R}-\tht_i)\}=m^{-1}g_{5i}(A)+o(m^{-1})$, so we can use 
$$
B^{-1}\sum_{b=1}^B\Big\{\thh_i^{\rm R}(y_i^{(b)},\phih_{\alpha}^{(b)})-\tht_i^{\rm R}(y_i^{(b)},\phih_{\alpha})\Big\}\Big\{(\tht_i^{\rm R}(y_i^{(b)},\phih_{\alpha})-\tht_i(y_i^{(b)},\phih_{\alpha}))\Big\}
$$
instead of $m^{-1}g_{5i}(A)$, where $\phih_{\alpha}^{(b)}$ is the parametric bootstrap estimator.
Similarly to Theorem \ref{thm:bMSE}, we can evaluate an error from a finite number of bootstrap iterations, but its evaluation is similar and the detailed proof is omitted.

\section{Examples}\label{sec:num}

\subsection{Simulation studies}\label{sec:comp}
We first investigate the estimation accuracy of the proposed robust estimator together with the existing estimators.
We consider the Fay--Herriot model
$$
y_i=\th_i+\ep_i, \ \ \ \ \ 
\th_i=\beta_0+\beta_1x_i+A^{1/2}u_i, \ \ \ \ i=1,\ldots,m,
$$
where $m=30$, $\beta_0=0$, $\beta_1=2$, $A=$0$\cdot$5, and $\ep_i\sim N(0,D_i)$.
The auxiliary variables $x_i$ are generated from the uniform distribution on $(0,1)$.
Further, we divide $m$ areas into five groups with an equal number of areas and set the same value of $D_i$ within the same groups.
The group $D_i$ pattern is (0$\cdot$2, 0$\cdot$4, 0$\cdot$6, 0$\cdot$8, 1$\cdot$0).
For the distribution of $u_i$, we adopt the structure: $u_i\sim (1-\xi) N(0,1)+\xi N(0,10^2)$, where $c$ determines the degree of misspecification of the assumed distribution (contamination by outliers).  
We consider three scenarios: (I) $\xi=0$, (II) $\xi=$0$\cdot$15, and (III) $\xi=$0$\cdot$30.
Note that, in scenarios (II) and (III), some observations have very large residuals and auxiliary information $x_i$ would be useful for such outlying observations.

We estimate $\th_i$ by using the proposed robust empirical Bayes estimator with density power divergence.
We used two inflation rates, $c=1$ and $c=5$, and $\alpha$ was selected following the procedure given in \S \ref{sec:selection}. 
We adopt four alternative methods: the classical empirical Bayes estimator, the robust Bayes estimator defined in (\ref{SR}) with the model parameters estimated by the robust estimation equation proposed by \cite{Sinha2009} and the maximum likelihood method, and the robust empirical Bayes estimator (\ref{GEB}) proposed by \cite{Ghosh2008} with the maximum likelihood estimator for the model parameters.
Following \cite{Sinha2009}, we set $K$=1$\cdot$345 in Huber's $\psi$-function in equation (\ref{SR}).
A suitable value of $K=K_i$ in (\ref{GEB}) is selected in the same way as in \cite{Ghosh2008} with a $5\%$ inflation rate.

We compute the mean squared errors of those estimators based on 20000 replicates.
Table \ref{tab:comp} reports the values of the mean squared errors averaged within the same groups as well as estimated Monte Carlo errors in the parenthesis.
From the reported values, the Monte Carlo errors seems negligibly small compared with the mean squared errors.
Since the normality assumption in the standard empirical Bayes method is correct in scenario (I), it would be natural that the empirical Bayes method provides smaller mean squared errors than the other methods, but the performance of some robust methods including the proposed method seem comparable with that of the standard method.
On the other hand, there are outlying observations in scenarios (II) and (III), under which the proposed methods tend to produce smaller mean squared errors than the other methods, especially for groups with large sampling variances.
In particular, the performance of the proposed method with $c=5$ is better than that with $c=1$ in these scenarios since the proposed method gets more robust with larger $c$.
However, the performance of the proposed method with $c=5$ is worse than $c=1$ in scenario (I), which would be a reasonable price for the stronger robustness as confirmed in scenarios (II) and (III).
In the Supplementary Material, we provide additional results for other scenarios of $u_i$ such as heavy tailed or skewed distributions.

\vspace{0.5cm}
\begin{table}[!htbp]
\caption{Mean squared errors averaged within the same group. The estimated Monte Carlo errors are reported in the parenthesis. All the values are multiplied by $1000$. DEB1: density power divergence with 1\% inflation rate, DEB2: density power divergence with 5\% inflation rate, EB: empirical Bayes, REB1: robust empirical Bayes of \cite{Sinha2009}, REB2: robust Bayes of \cite{Sinha2009} with maximum likelihood, GEB: robust empirical Bayes of \cite{Ghosh2008}.}
\label{tab:comp}
\begin{center}
\begin{tabular}{cccccccccccccc}
 \hline
 Scenario & Group & DEB1 & DEB2 & EB & REB1 & REB2 & GEB \\
 \hline
 & 1 & 159{\scriptsize (0$\cdot$3)} & 159{\scriptsize (0$\cdot$3)} & 156{\scriptsize (0$\cdot$3)} & 173{\scriptsize (0$\cdot$3)} & 158{\scriptsize (0$\cdot$3)} & 158{\scriptsize (0$\cdot$3)} \\
 & 2 & 256{\scriptsize (0$\cdot$4)} & 258{\scriptsize (0$\cdot$4)} & 252{\scriptsize (0$\cdot$4)} & 280{\scriptsize (0$\cdot$5)} & 258{\scriptsize (0$\cdot$4)} & 309{\scriptsize (0$\cdot$5)} \\
(I) & 3 & 320{\scriptsize (0$\cdot$5)} & 326{\scriptsize (0$\cdot$6)} & 316{\scriptsize (0$\cdot$5)} & 343{\scriptsize (0$\cdot$6)} & 323{\scriptsize (0$\cdot$6)} & 369{\scriptsize (0$\cdot$7)} \\
 & 4 & 356{\scriptsize (0$\cdot$6)} & 366{\scriptsize (0$\cdot$6)} & 352{\scriptsize (0$\cdot$6)} & 378{\scriptsize (0$\cdot$7)} & 359{\scriptsize (0$\cdot$6)} & 393{\scriptsize (0$\cdot$7)} \\
 & 5 & 383{\scriptsize (0$\cdot$7)} & 397{\scriptsize (0$\cdot$7)} & 378{\scriptsize (0$\cdot$7)} & 400{\scriptsize (0$\cdot$7)} & 382{\scriptsize (0$\cdot$6)} & 412{\scriptsize (0$\cdot$7)} \\
 \hline
 & 1 & 186{\scriptsize (0$\cdot$3)} & 179{\scriptsize (0$\cdot$3)} & 192{\scriptsize (0$\cdot$3)} & 328{\scriptsize (5$\cdot$5)} & 189{\scriptsize (0$\cdot$3)} & 189{\scriptsize (0$\cdot$3)} \\
 & 2 & 353{\scriptsize (0$\cdot$6)} & 327{\scriptsize (0$\cdot$6)} & 372{\scriptsize (0$\cdot$6)} & 1017{\scriptsize (12$\cdot$8)} & 362{\scriptsize (0$\cdot$6)} & 364{\scriptsize (0$\cdot$6)} \\
(II) & 3 & 506{\scriptsize (0$\cdot$9)} & 458{\scriptsize (0$\cdot$8)} & 545{\scriptsize (1$\cdot$0)} & 1833{\scriptsize (16$\cdot$9)} & 523{\scriptsize (0$\cdot$9)} & 529{\scriptsize (0$\cdot$9)} \\
 & 4 & 640{\scriptsize (1$\cdot$2)} & 571{\scriptsize (1$\cdot$1)} & 701{\scriptsize (1$\cdot$3)} & 2702{\scriptsize (21$\cdot$3)} & 667{\scriptsize (1$\cdot$2)} & 679{\scriptsize (1$\cdot$2)} \\
 & 5 & 771{\scriptsize (1$\cdot$5)} & 678{\scriptsize (1$\cdot$3)} & 858{\scriptsize (1$\cdot$6)} & 3594{\scriptsize (24$\cdot$7)} & 810{\scriptsize (1$\cdot$5)} & 825{\scriptsize (1$\cdot$5)} \\
 \hline
 & 1 & 194{\scriptsize (0$\cdot$3)} & 190{\scriptsize (0$\cdot$3)} & 196{\scriptsize (0$\cdot$3)} & 223{\scriptsize (2$\cdot$9)} & 195{\scriptsize (0$\cdot$3)} & 195{\scriptsize (0$\cdot$3)} \\
 & 2 & 382{\scriptsize (0$\cdot$6)} & 367{\scriptsize (0$\cdot$6)} & 389{\scriptsize (0$\cdot$7)} & 517{\scriptsize (6$\cdot$2)} & 385{\scriptsize (0$\cdot$6)} & 385{\scriptsize (0$\cdot$6)} \\
(III) & 3 & 562{\scriptsize (1$\cdot$0)} & 534{\scriptsize (0$\cdot$9)} & 578{\scriptsize (1$\cdot$0)} & 949{\scriptsize (9$\cdot$9)} & 568{\scriptsize (1$\cdot$0)} & 568{\scriptsize (1$\cdot$0)} \\
 & 4 & 739{\scriptsize (1$\cdot$2)} & 696{\scriptsize (1$\cdot$2)} & 764{\scriptsize (1$\cdot$3)} & 1518{\scriptsize (14$\cdot$4)} & 748{\scriptsize (1$\cdot$3)} & 747{\scriptsize (1$\cdot$2)} \\
 & 5 & 900{\scriptsize (1$\cdot$5)} & 840{\scriptsize (1$\cdot$5)} & 937{\scriptsize (1$\cdot$6)} & 2253{\scriptsize (18$\cdot$5)} & 911{\scriptsize (1$\cdot$5)} & 913{\scriptsize (1$\cdot$5)} \\
 \hline
\end{tabular}
\end{center}
\end{table}

We next investigate the finite sample performance of the bootstrap estimator of the mean squared error $\hat{M_i}$.
We adopt the same data-generating model with the three scenarios of the distribution of $u_i$ in the previous study with $m=20$.
We also consider the naive estimators of the mean squared error, $\hat{M}_i^{(n1)}$ and $\hat{M}_i^{(n2)}$, obtained by replacing $A$ with $\Ah_{\alpha}$ in the mean squared error formula given in  
Theorems \ref{thm:BEMSE} and \ref{thm:mse}, respectively.
Note that $\hat{M}_i^{(n1)}$ ignores the variability of the estimation of model parameters, and $\hat{M}_i^{(n2)}$ ignores the bias of $g_{1i}(\Ah_{\alpha})+g_{2i}(\Ah_{\alpha})$.  
The motivation using these estimators together with $\hat{M_i}$ is to clarify the importance of the second order unbiasedness under finite sample settings.

We estimate the true values of the mean squared error of the robust empirical Bayes estimator $M_i$ in advance, based on 5000 simulated data. 
The relative bias and square root of the relative mean squared error of the estimator $\hat{M}_i$ are
\begin{align*}
&{\rm RBias}(\hat{M}_i)=100\times E(\hat{M}_i-M_i)/M_i, \\
&{\rm RRMSE}(\hat{M}_i)=100\times E\{(\hat{M}_i-M_i)^2\}/M_i^2.
\end{align*}
These values are computed as averages based on 2000 simulation runs with the bootstrap sample size 1000; they are also averaged within the same groups.

Table \ref{tab:mse-sim} reports the relative biases and square roots of the relative mean squared errors of $\hat{M}_i$, $\hat{M}_i^{(n1)}$, and $\hat{M}_i^{(n2)}$.    
The bootstrap estimator $\hat{M}_i$ outperforms the other estimators owing to the second-order unbiasedness provided in Theorem \ref{thm:bMSE}.
The crude estimators $\hat{M}_i^{(n1)}$ and $\hat{M}_i^{(n2)}$ seem undesirable in practice since they have serious negative biases.

\begin{table}[!htbp]
\caption{Relative bias and root relative mean squared errors for the estimators of the mean squared error.}
\begin{center}
\begin{tabular}{ccccccccccccccccccc}
\hline
& & \multicolumn{3}{c}{RBias}  &   \multicolumn{3}{c}{ RRMSE }\\
Scenario & Group & $\hat{M}_i^{(n1)}$ & $\hat{M}_i^{(n2)}$ & $\hat{M}_i$ & $\hat{M}_i^{(n1)}$ & $\hat{M}_i^{(n2)}$ & $\hat{M}_i$\\
\hline
 & 1 & $-$30$\cdot$9 & $-$14$\cdot$2 & ~ 1$\cdot$3 & 42$\cdot$2 & 19$\cdot$1 & 18$\cdot$3 \\
 & 2 & $-$34$\cdot$6 & $-$20$\cdot$0 & $-$4$\cdot$5 & 46$\cdot$9 & 31$\cdot$3 & 31$\cdot$0 \\
I & 3 & $-$37$\cdot$3 & $-$23$\cdot$5 & $-$8$\cdot$0 & 49$\cdot$8 & 36$\cdot$9 & 36$\cdot$5 \\
 & 4 & $-$34$\cdot$8 & $-$25$\cdot$9 & $-$9$\cdot$2 & 50$\cdot$1 & 42$\cdot$9 & 43$\cdot$4 \\
 & 5 & $-$36$\cdot$8 & $-$26$\cdot$3 & $-$9$\cdot$5 & 51$\cdot$7 & 43$\cdot$5 & 44$\cdot$4 \\
 \hline
 & 1 & ~$-$8$\cdot$3 & ~$-$4$\cdot$4 & ~4$\cdot$3 & 20$\cdot$6 & 13$\cdot$0 & 11$\cdot$4 \\
 & 2 & $-$10$\cdot$9 & ~$-$6$\cdot$7 & ~4$\cdot$1 & 25$\cdot$7 & 20$\cdot$2 & 18$\cdot$3 \\
II & 3 & $-$15$\cdot$1 & $-$10$\cdot$6 & ~1$\cdot$3 & 29$\cdot$8 & 24$\cdot$8 & 22$\cdot$0 \\
 & 4 & $-$12$\cdot$5 & ~$-$9$\cdot$5 & ~4$\cdot$4 & 31$\cdot$7 & 29$\cdot$0 & 28$\cdot$1 \\
 & 5 & $-$16$\cdot$2 & $-$12$\cdot$4 & ~2$\cdot$1 & 34$\cdot$1 & 30$\cdot$8 & 29$\cdot$4 \\
 \hline
 & 1 & ~$-$2$\cdot$8 & ~$-$1$\cdot$7 & ~3$\cdot$8 & 12$\cdot$3 & ~9$\cdot$1 & ~9$\cdot$0 \\
 & 2 & ~$-$5$\cdot$2 & ~$-$3$\cdot$8 & ~3$\cdot$9 & 16$\cdot$0 & 13$\cdot$6 & 12$\cdot$5 \\
III & 3 & ~$-$7$\cdot$1 & ~$-$5$\cdot$4 & ~4$\cdot$1 & 18$\cdot$5 & 16$\cdot$2 & 14$\cdot$9 \\
 & 4 & ~$-$6$\cdot$4 & ~$-$5$\cdot$3 & ~6$\cdot$0 & 20$\cdot$4 & 19$\cdot$1 & 18$\cdot$7 \\
 & 5 & ~$-$9$\cdot$7 & ~$-$8$\cdot$3 & ~4$\cdot$3 & 22$\cdot$6 & 20$\cdot$9 & 19$\cdot$7 \\
 \hline
\end{tabular}
\end{center}
\label{tab:mse-sim}
\end{table}

\subsection{Fresh milk expenditure data}
We consider an application to fresh milk expenditure data from the U.S. Bureau of Labor
Statistics, which was used in \cite{Arora1997} and \cite{YC2006}.
In the data set, the estimated values of the average expenditure on fresh milk for 1989, $y_i$, are available for 43 areas, with the sampling variances $D_i$.
Following \cite{Arora1997}, we consider the Fay--Herriot model (\ref{FH}) with $x_i^\T \beta=\beta_j\ (j=1,\ldots,4)$ if the $i$th area belongs to the $j$th region.
The four regions are $R_1=\{1,\ldots,7\}$, $R_2=\{8,\ldots,14\}$, $R_3=\{15,\ldots,25\}$, and $R_4=\{26,\ldots,43\}$.

Figure \ref{fig:milk} illustrates the scatterplot of $y_i$ with the maximum likelihood estimates of $\beta_1,\ldots,\beta_4$, suggesting that there are some outliers in regions $R_1$ and $R_2$.
To see this, we compute the standardized residuals 
$$
r_i=(\hat{A}+D_i)^{-1/2}\Big\{y_i-\sum_{j=1}^4\beh_jI(i\in M_j)\Big\}, \ \ \ \ \ i=1,\ldots,m.
$$
When the Fay--Herriot model (\ref{FH}) is correctly specified, the distribution of $r_i$ is close to standard normal.
However, as shown in Table \ref{tab:milk-EB}, the absolute values of $r_i$ are high in some areas.

We estimate the parameters using the robust estimation equation of \cite{Sinha2009} and the density power divergence method proposed as the solution to (\ref{EE}) with $1\%$ and $5\%$ inflation rates.
Table \ref{tab:milk} shows that the estimates of $\beta_3$ and $\beta_4$ are similar for the four methods, whereas those of $\beta_1$, $\beta_2$, and $A$ are not necessarily because of the outlying areas in regions $R_1$ and $R_2$.

To estimate $\th_i$, we adopt the classical empirical Bayes estimator $\thh_i^{\rm EB}$, the proposed robust empirical Bayes estimator $\thh_i^{\rm R}$ with a $5\%$ inflation rate, and the robust empirical Bayes estimator $\thh_i^{\rm SR}$ proposed by \cite{Sinha2009}. 
We use $\hat{M}_i$ given in (\ref{bMSE}) to estimate the mean squared error of $\hat{M}_i$ with $B=1000$. 
We then define the mean squared errors of $\thh_i^{\rm EB}$ and $\thh_i^{\rm SR}$ as $M_i^{\rm EB}$ and $M_i^{\rm SR}$, respectively; these are estimated from the result in \cite{Datta2000} for $M_i^{\rm EB}$ and `saeRobust' package in ``R'' for $M_i^{\rm SR}$. 
Table \ref{tab:milk-EB} shows that the differences between $\thh_i^{\rm EB}$ and $\thh_i^{\rm R}$ are large in areas with large absolute values of $r_i$.
Similar phenomena can be observed for the relationship between $\hat{M}_i^{\rm EB}$ and $\hat{M}_i$.
On the contrary, the values of $\thh_i^{\rm SR}$ and $M_i^{\rm SR}$ are different from the others, which might come about from the lower estimate of $A$ as presented in Table \ref{tab:milk}.

\begin{table}[!htbp]
\caption{
Estimates of the model parameters from the four methods. 
Standard errors are shown in the parenthesis.
The estimates and standard errors of $A$ are multiplied by $100$.
}
\begin{center}
\begin{tabular}{ccccccccccccccccccc}
\hline
&& $\beta_1$ & $\beta_2$ & $\beta_3$ & $\beta_4$ & $A$ \\
\hline
Maximum likelihood &  & 0$\cdot$97 & 1$\cdot$10 & 1$\cdot$19 & 0$\cdot$73 & 1$\cdot$55 \\
 &  & (0$\cdot$07) & (0$\cdot$07)& (0$\cdot$06)& (0$\cdot$04) & (0$\cdot$68) \\
Robust maximum likelihood &  & 1$\cdot$01 & 1$\cdot$18 & 1$\cdot$19 & 0$\cdot$73 & 0$\cdot$80 \\
 &  & (0$\cdot$06) & (0$\cdot$07) & (0$\cdot$05) & (0$\cdot$03) & (0$\cdot$53) \\
Density power divergence (1\% inflation) &  & 0$\cdot$97 & 1$\cdot$12 & 1$\cdot$19 & 0$\cdot$73 & 1$\cdot$50 \\
 &  & (0$\cdot$07) & (0$\cdot$07) & (0$\cdot$06) & (0$\cdot$04) & (0$\cdot$65) \\
Density power divergence (5\% inflation) &  & 0$\cdot$98 & 1$\cdot$15 & 1$\cdot$19 & 0$\cdot$73 & 1$\cdot$35 \\
 &  & (0$\cdot$06) & (0$\cdot$07) & (0$\cdot$06) & (0$\cdot$04) & (0$\cdot$60) \\
 \hline
\end{tabular}
\end{center}
\label{tab:milk}
\end{table}

\begin{table}[!htbp]
\caption{Values of the empirical Bayes estimator, proposed robust empirical Bayes estimator, and robust empirical Bayes estimator of \cite{Sinha2009} with their estimates of the mean squared errors.
The values of $\hat{M}_i^{\rm EB}$, $\hat{M}_i$, and $\hat{M}_i^{\rm SR}$ are multiplied by 100. 
}
\begin{center}
\begin{tabular}{cccccccccccccc}
\hline
area & region & $y_i$ & $r_i$ & $\thh_i^{\rm EB}$ & $\thh_i^{\rm R}$ & $\thh_i^{\rm SR}$ &  $\hat{M}_i^{\rm EB}$ & $\hat{M}_i$ & $\hat{M}_i^{\rm SR}$\\
\hline
1 & 1 & 1$\cdot$10 & ~ 0$\cdot$64 & 1$\cdot$02 & 1$\cdot$02 & 1$\cdot$03 & 1$\cdot$35 & 1$\cdot$35 & 0$\cdot$81 \\
4 & 1 & 0$\cdot$63 & $-$2$\cdot$05 & 0$\cdot$78 & 0$\cdot$76 & 0$\cdot$91 & 0$\cdot$85 & 0$\cdot$85 & 4$\cdot$70 \\
5 & 1 & 0$\cdot$75 & $-$1$\cdot$25 & 0$\cdot$86 & 0$\cdot$87 & 0$\cdot$92 & 0$\cdot$96 & 0$\cdot$96 & 4$\cdot$44 \\
9 & 2 & 1$\cdot$41 & ~ 1$\cdot$48 & 1$\cdot$21 & 1$\cdot$24 & 1$\cdot$23 & 1$\cdot$42 & 1$\cdot$40 & 4$\cdot$19 \\
11 & 2 & 0$\cdot$62 & $-$3$\cdot$01 & 0$\cdot$80 & 0$\cdot$73 & 1$\cdot$07 & 0$\cdot$77 & 0$\cdot$78 & 5$\cdot$06 \\
12 & 2 & 1$\cdot$46 & ~ 1$\cdot$54 & 1$\cdot$20 & 1$\cdot$24 & 1$\cdot$23 & 1$\cdot$63 & 1$\cdot$62 & 5$\cdot$66 \\
20 & 3 & 1$\cdot$29 & ~ 0$\cdot$48 & 1$\cdot$23 & 1$\cdot$22 & 1$\cdot$22 & 1$\cdot$31 & 1$\cdot$32 & 0$\cdot$77 \\
25 & 3 & 1$\cdot$19 & $-$0$\cdot$01 & 1$\cdot$19 & 1$\cdot$19 & 1$\cdot$19 & 0$\cdot$81 & 0$\cdot$84 & 0$\cdot$56 \\
31 & 4 & 0$\cdot$89 & ~ 0$\cdot$63 & 0$\cdot$76 & 0$\cdot$76 & 0$\cdot$75 & 1$\cdot$54 & 1$\cdot$63 & 0$\cdot$78 \\
37 & 4 & 0$\cdot$44 & $-$1$\cdot$84 & 0$\cdot$54 & 0$\cdot$54 & 0$\cdot$61 & 0$\cdot$64 & 0$\cdot$65 & 3$\cdot$59 \\
\hline
\end{tabular}
\end{center}
\label{tab:milk-EB}
\end{table}

\begin{figure}
\centering
\includegraphics[width=11cm,clip]{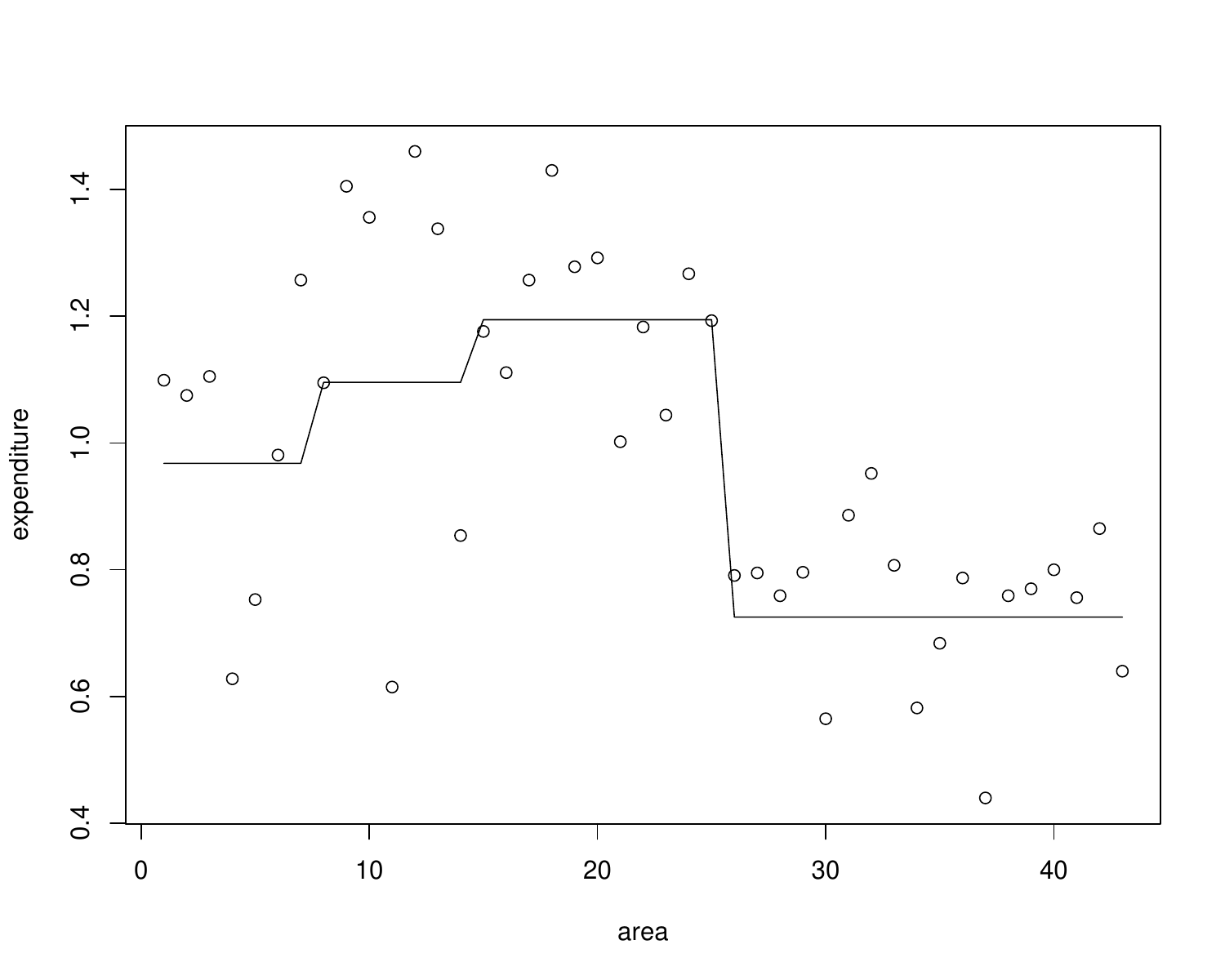} 
\caption{Scatterplot of $y_i$ with the maximum likelihood estimates of $\beta_1,\ldots,\beta_4$. }
\label{fig:milk}
\end{figure}

\section{Final Remarks}\label{sec:disc}
The proposed method would be recommended compared with existing methods especially when there exist outlying observations, as shown in our numerical studies.
Since we revealed some asymptotic properties of the proposed method only under the correct model, investigating asymptotic properties under general model misspecification would be an interesting future work. 
Although this study focused on the Fay--Herriot model, which is standard in small area estimation, the nested error regression model \citep{Battese1998} would be more useful when unit-level data is available.
While several robust methods have been already proposed \citep{Cham2014, Cham2006, Sinha2009}, the extension of the proposed method to unit-level data would be an interesting research direction.
Extending the proposed idea to the non-normal model based on natural exponential family \citep{GM2004} would also be worthwhile.
Finally, several forms of generalized likelihood other than density power divergence have been proposed, such as $\gamma$-divergence \citep{Fujisawa2008}.
The main advantage of density power divergence in this context is its mathematical simplicity.
As presented in \S \ref{sec:DPD}, the robust Bayes predictor has a simple form.
A detailed comparison among generalized likelihood methods is left to a future study.

\section*{Acknowledgments}
The author was supported by the Japan Society of the Promotion of Science (KAKENHI) grant number 18K12757.

\appendix

\section*{Appendix}

{\it Proof of Theorem \ref{thm:BEMSE}.} \ \ \ Since $\tht_i=E(\th_i|y_i)$, 
\begin{align*}
E\{(\tht_i^R-\th_i)^2\}
&=E\{(\tht_i-\th_i)^2\}+E\{(\tht_i^R-\tht_i)^2\}\\
&=\frac{AD_i}{A+D_i}+E\{(\tht_i^R-\tht_i)^2\}\equiv g_{1i}(A)+g_{2i}(A).
\end{align*}
Since $\tht_i^R-\tht_i=(A+D_i)^{-1}D_i(y_i-x_i^\T \beta)(1-s_i)$, 
\begin{equation*}
g_{2i}(A)=\frac{D_i^2}{(A+D_i)^2}E\Big\{(y_i-x_i^\T \beta)^2(1-s_i)^2\Big\}.
\end{equation*}
Using Lemma 1 in the Supplementary Material, we obtain the expression for $g_{2i}(A)$.

We next show that $g_{2i}(A)$ is increasing in $\alpha\in (0,1)$.
For notational simplicity, we put $\mu_i=x_i^\T \beta$.
Since $(y_i-\mu_i)^2(1-s_i)^2$ is a continuous and differentiable function of $y_i$ and $\alpha$, we have
\begin{align*}
\frac{\partial g_{2i}(A)}{\partial\alpha}
&=-\frac{2D_i^2}{(A+D_i)^2}E\Big\{(y_i-\mu_i)^2(1-s_i)\frac{\partial s_i}{\partial \alpha}\Big\}.
\end{align*}
Note that $s_i=f_i(y_i;\phi)^{\alpha}$.
If $f(y_i;\phi)\leq 1$, then $1-s_i\geq 0$ and $s_i$ is decreasing with respect to $\alpha$.
Then, it follows that $(1-s_i)\partial s_i/\partial\alpha\leq 0$. 
On the other hand, we have $(1-s_i)\partial s_i/\partial\alpha\leq 0$ if $f(y_i;\phi)\geq 1$ by a similar argument.
Hence, $(1-s_i)\partial s_i/\partial\alpha\leq 0$ always follows, thereby we have $\partial g_{2i}(A)/\partial \alpha\geq 0$ for $\alpha\in (0,1)$, which completes the proof.

\vspace{0.3cm}
{\it Proof of Theorem \ref{thm:bMSE}.} \ \ \ \ 
It follows that 
\begin{align*}
E\{g_{12i}(\Ah_{\alpha})-g_{12i}(A)\}
&=b_A\frac{\partial g_{12i}(A)}{\partial A}+\frac1{2m}\frac{\partial^2 g_{12i}(A)}{\partial A^2}\frac{K_A}{J_A^2}+\frac16\frac{\partial^3 g_{12i}(A)}{\partial A_{\ast}^3}E\{(\Ah_{\alpha}-A)^3\},
\end{align*}
where $A^{\ast}$ is between $A$ and $\Ah_{\alpha}$.
From Lemma 2 in the Supplementary Material, it holds that $E\{g_{12i}(\Ah_{\alpha})-g_{12i}(A)\}=m^{-1}d(A)+o(m^{-1})$, where $d(\cdot)$ is a smooth function.
Then, from \cite{Butar2003}, we have $E(\Mh_i^{\dagger}-M_i)=o(m^{-1})$.

From the definition of $U_i$, we have
$$
U_i=\frac1B\sum_{b=1}^Bg_{12i}(\Ah_{\alpha}^{(b)})-E^{\ast}\{g_{12i}(\Ah_{\alpha}^{\ast})\},
$$
where $E^{\ast}$ denotes the expectation with respect to the bootstrap sample.
Noting $E(U_i|y)=0$, we observe that 
\begin{align*}
\Var(U_i)&=E\{\Var(U_i|y)\}=\frac1BE[\Var\{g_{12i}(\Ah_{\alpha}^{(1)})|y\}]\\
&=\frac1BE\{E([g_{12i}(\Ah_{\alpha}^{(1)})-E^{\ast}\{g_{12i}(\Ah_{\alpha}^{\ast})\}]^2|y)\}\\
&=\frac1BE\{g_{12i}'(A^{\dagger})^2(\Ah_{\alpha}^{(1)}-A)^2\},
\end{align*}
where $g_{12i}'(A)=\partial g_{12i}(A)/\partial A$ and $A^{\dagger}=\ep A+(1-\ep)\Ah_{\alpha}^{(1)}$ for some $\ep\in [0,1]$.
Straightforward calculation shows that 
\begin{align*}
g_{12i}'(A)=\frac{D_i^2}{(A+D_i)^2}-\frac{g_{2i}(A)}{A+D_i}+\frac{2\pi\alpha D_i}{A+D_i}\left\{\frac{U_i^{\alpha+2}}{(\alpha+1)^{3/2}}-\frac{U_i^{2\alpha+2}}{(2\alpha+1)^{3/2}}\right\}.
\end{align*}
Note that $0\leq U_i\leq (2\pi D_i)^{-1/2}$, thereby $\sup_A|g_{12i}'(A)|\leq C(D_i,\alpha)<\infty$ under Condition 1.
Hence, it follows that 
$$
\Var(U_i)\leq \frac1B C(D_i,\alpha)^2E\{(\Ah_{\alpha}^{(1)}-A)^2\}=O\{(mB)^{-1}\},
$$
which completes the proof.


\setcounter{equation}{0}
\renewcommand{\theequation}{S\arabic{equation}}
\setcounter{section}{0}
\renewcommand{\thesection}{S\arabic{section}}
\setcounter{lem}{0}
\renewcommand{\thelem}{S\arabic{lem}}
\setcounter{table}{0}
\renewcommand{\thetable}{S\arabic{table}}

\newpage
\begin{center}
{\Large{\bf 
Supplementary material for ``Robust Empirical Bayes Small Area Estimation with Density Power Divergence''
}}
\end{center}

\section{Useful Lemma}
In what follows, we use $s_i$ instead of $s_i(y_i;\phi)$ when there is no confusion.
\begin{lem}\label{lem:moment}
When $y_i\sim N(x_i^\T \beta,A+D_i)$, it holds that 
\begin{align*}
&E\{(y_i-x_i^\T \beta)^{2j-1}s_i^k\}=0, \ \ \ \ j,k=1,2,\ldots \\
&E\{(y_i-x_i^\T \beta)^{2j}s_i^k\}=V_i^{k\alpha}(k\alpha+1)^{-j-1/2}(2j-1)!!(A+D_i)^j, \ \ \ j,k=0,1,2,\ldots,
\end{align*}
where $(2j-1)!!=(2j-1)(2j-3)\cdots(1)$. 
\end{lem}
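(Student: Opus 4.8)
The plan is to evaluate both expectations by a single Gaussian-integral computation, exploiting the fact that $s_i^k$ is itself a Gaussian kernel. First I would center the variable: writing $z=y_i-\x_i^t\bbe$, the model gives $z\sim N(0,A+D_i)$, and I set $\sigma^2=A+D_i$ so that the density of $z$ is $V_i\exp\{-z^2/(2\sigma^2)\}$ with $V_i=(2\pi\sigma^2)^{-1/2}$. From (\ref{si}) we have $s_i^k=V_i^{k\alpha}\exp\{-k\alpha z^2/(2\sigma^2)\}$, so both quantities reduce to
$$
\E[z^n s_i^k]=V_i^{k\alpha+1}\int_{-\infty}^{\infty} z^n\exp\left\{-\frac{(k\alpha+1)z^2}{2\sigma^2}\right\}dz,
$$
where $n=2j-1$ in the first identity and $n=2j$ in the second.

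For the odd case $n=2j-1$ the integrand is an odd function of $z$, so the integral vanishes and the first identity follows immediately. The even case is the only place where actual computation is needed: the remaining integral is, up to a normalizing factor, the $2j$-th moment of a centered normal with variance $\tau^2:=\sigma^2/(k\alpha+1)$. Using the standard formula $\int z^{2j}(2\pi\tau^2)^{-1/2}\exp\{-z^2/(2\tau^2)\}dz=(2j-1)!!\,\tau^{2j}$, I would obtain
$$
\int_{-\infty}^{\infty} z^{2j}\exp\left\{-\frac{z^2}{2\tau^2}\right\}dz=(2j-1)!!\sqrt{2\pi}\,\tau^{2j+1}.
$$

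The last step is bookkeeping of constants. Substituting $\tau^{2j+1}=\sigma^{2j+1}(k\alpha+1)^{-j-1/2}$ and collecting the prefactors, the product $V_i^{k\alpha+1}\sqrt{2\pi}\,\sigma^{2j+1}$ collapses because $V_i\sqrt{2\pi}\,\sigma=1$, leaving $V_i^{k\alpha}\sigma^{2j}$; since $\sigma^{2j}=(A+D_i)^j$ this yields exactly $V_i^{k\alpha}(k\alpha+1)^{-j-1/2}(2j-1)!!(A+D_i)^j$, as claimed. The one point requiring care is tracking the powers of $V_i$, $2\pi$, and $\sigma$ so that they telescope into the stated form; there is no genuine analytic obstacle, only the need to verify that the normalization matches for all $j,k\ge 0$, the boundary cases $j=0$ or $k=0$ being covered by the same computation.
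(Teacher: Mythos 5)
Your computation is correct and is essentially identical to the paper's own proof: both reduce $\E[(y_i-\x_i^t\bbe)^c s_i^k]$ to a Gaussian integral with inflated precision $(k\alpha+1)/(A+D_i)$, invoke oddness for the first identity, and use the standard even-moment formula $(2j-1)!!\,\tau^{2j}$ for the second. The bookkeeping of the prefactors $V_i$, $\sqrt{2\pi}$, and $\sigma$ checks out and matches the stated result.
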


\begin{proof}
Note that
\begin{align*}
E\{(y_i-x_i^\T \beta)^{c}s_i^k\}
&=\frac{V_i^{k\alpha}}{\{2\pi(A+D_i)\}^{1/2}}
\int_{-\infty}^{\infty}(t-x_i^\T \beta)^c\exp\left\{-\frac{(k\alpha+1)(t-x_i^\T \beta)^2}{2(A+D_i)}\right\}\text{d}t\\
&=\frac{V_i^{k\alpha}}{(k\alpha+1)^{1/2}}E(Z^c),
\end{align*}
where $Z\sim N\{0,(A+D_i)/(k\alpha+1)\}$.
Hence, the expectation is $0$ when $c$ is odd.
On the other hand, when $c=2j, \ j=0,1,2,\ldots$, it follows that $E(Z^{2j})=(2j-1)!!(A+D_i)^j(k\alpha+1)^{-j}$, which completes the proof.
\end{proof}

\section{Proof of Theorem 2}
Let $F_\beta$ and $F_A$ be the first and second estimating functions in (10), and defined $F_{\phi}=(F_\beta^\T ,F_A)^\T $
Under Conditions 1-3 in the main article, the theory of unbiased estimating equation by \cite{Godambe1960} shows that $\phih_{\alpha}=(\beh_{\alpha}^\T ,\Ah_{\alpha})^\T $ is consistent and asymptotically normal, with the asymptotic covariance matrix given by
$$
\lim_{m\to\infty}
E\left(\frac1m\frac{\partial F_\phi}{\partial\phi^\T }\right)^{-1}
E\left(\frac1mF_\phi F_\phi^\T\right)
E\left(\frac1m\frac{\partial F_\phi}{\partial\phi^\T }\right)^{-1}.
$$
Straightforward calculation shows that 
\begin{align*}
\frac{\partial F_\beta}{\partial\beta^\T }&
=\sum_{i=1}^m\frac{x_ix_i^\T s_i}{(A+D_i)^2}\Big\{\alpha(y_i-x_i^\T \beta)^2-(A+D_i)\Big\}\\
\frac{\partial F_A}{\partial\beta}&
=\sum_{i=1}^m\frac{x_is_i(y_i-x_i^\T \beta)}{(A+D_i)^3}\Big\{\alpha(y_i-x_i^\T \beta)^2-(\alpha+2)(A+D_i)\Big\}\\
\frac{\partial F_A}{\partial A}&
=\frac12\sum_{i=1}^m\bigg\{\frac{\alpha s_i(y_i-x_i^\T \beta)^4}{(A+D_i)^4}
-\frac{(\alpha^2+2\alpha) V_i^{\alpha}}{(\alpha+1)^{3/2}(A+D_i)^2}
\\
&\hspace{2cm} -\frac{2s_i(\alpha+2)}{(A+D_i)^3}(y_i-x_i^\T \beta)^2
+\frac{s_i(\alpha+2)}{(A+D_i)^2}\bigg\}.
\end{align*}
Then, using lemma \ref{lem:moment}, $E\big(\partial F_\beta/\partial\beta^\T \big)=mJ_{\beta}$, $E\big(\partial F_A/\partial\beta\big)=0$ and $E\big(\partial F_A/\partial A\big)=-mJ_A$.
Moreover, from (12) in the main article and lemma \ref{lem:moment}, $E(F_\beta F_\beta^t)=K_{\beta}$, $E(F_\beta F_A)=0$ and $E(F_A^2)=K_{A}$.
Hence, $\beh_{\alpha}$ and $\Ah_{\alpha}$ is asymptotically independent and their asymptotic covariance matrices are $J_{\beta}^{-1}K_{\beta}J_{\beta}^{-1}$ and $J_A^{-1}K_AJ_A^{-1}$, respectively.

\section{Proof of Theorem 3}
The mean squared error $M_i=E\{(\thh_i^R-\th_i)^2\}$ can be decomposed as 
$$
M_i=E\{(\tht_i^R-\th_i)^2\}+2E\{(\tht_i^R-\th_i)(\thh_i^R-\tht_i^R)\}+E\{(\thh_i^R-\tht_i^R)^2\},
$$
and the first term reduces to $g_{1i}(A)+g_{2i}(A)$ whose expressions are given in Theorem 1.

We first evaluate the third term.
Taylor series expansion shows that 
$$
\thh_i^R-\tht_i^R=\frac{\partial\tht_i^R}{\partial\phi^\T }(\phih_{\alpha}-\phi)+\frac12(\phih_{\alpha}-\phi)^\T \frac{\partial^2\tht_i^R}{\partial\phi_{\ast}\partial\phi_{\ast}^\T }(\phih_{\alpha}-\phi),
$$
where $\phi_{\ast}$ is on the line connecting $\phi$ and $\phih_{\alpha}$. 
Then, we get 
$$
E\{(\thh_i^R-\tht_i^R)^2\}=E\Big[\Big\{\frac{\partial\tht_i^R}{\partial\phi^\T }(\phih_{\alpha}-\phi)\Big\}^2\Big]+R_1+R_2,
$$
where $R_1=E\{(\phih_{\alpha}-\phi)^\T (\partial \tht_i^R/\partial\phi)(\phih_{\alpha}-\phi)^\T (\partial^2 \tht_i^R/\partial\phi_{\ast}\partial\phi_{\ast}^\T )(\phih_{\alpha}-\phi)^\T \}$ and $R_2=E[\big\{(\phih_{\alpha}-\phi)^\T (\partial^2 \tht_i^R/\partial\phi_{\ast}\partial\phi_{\ast}^\T )(\phih_{\alpha}-\phi)^\T \big\}^2]/4$.
Here we use the following lemma.

\begin{lem}\label{lem:order}
Under Conditions 1-3 in the main article, $E(|\phih_{\alpha(k)}-\phi_k|^r)=O(m^{-r/2})$ for any $r>0$ and $k=1,\ldots,p+1$, where $\phih_{\alpha(k)}$ is the $k$th element of $\phih_{\alpha}$.
\end{lem}

A rigorous proof of the lemma requires a uniform integrability, but intuitively, from Theorem 2, $E(m^r|\phih_{\alpha(k)}-\phi_k|^r)=O(1)$ under Conditions 1-3, which leads to Lemma \ref{lem:order}.

In what follows, we use $u_i=y_i-x_i^\T \beta$ and $B_i=A+D_i$ for notational simplicity.
The straightforward calculation shows that 
\begin{align*}
\frac{\partial \tht_i^R}{\partial \beta}
=-\frac{D_is_ix_i}{B_i^2}\left(\alpha u_i^2-B_i\right), \ \ \ \ 
\frac{\partial \tht_i^R}{\partial A}
=-\frac{D_is_iu_i}{2B_i^3}\left\{\alpha u_i^2-(2-\alpha)B_i\right\}.
\end{align*}
Moreover, we have
\begin{align*}
\frac{\partial^2 \tht_i^R}{\partial\beta\partial\beta^\T }
&=-\frac{D_is_ix_ix_i^\T }{B_i^3}(\alpha u_i^3-3B_iu_i)\\
\frac{\partial^2 \tht_i^R}{\partial A^2}
&=\frac{D_is_iu_i}{12B_i^5}\left\{-3\alpha^2 u_i^3+3\alpha^2B_i u_i^2+(4\alpha-3\alpha^2)B_iu_i+(\alpha-2)(3\alpha+8)B_i^2\right\}.
\end{align*}
Note that 
\begin{align*}
R_1&=\sum_{j=1}^{p+1}\sum_{k=1}^{p+1}\sum_{\ell=1}^{p+1}
E\Big\{\Big(\frac{\partial\tht^R_i}{\partial\phi_j}\Big)\Big(\frac{\partial^2\tht^R_i}{\partial\phi_k\partial\phi_{\ell}}\Big)(\phih_{\alpha(j)}-\phi_j)(\phih_{\alpha(k)}-\phi_k)(\phih_{\alpha(\ell)}-\phi_{\ell})\Big\}\\
&\equiv \sum_{j=1}^{p+1}\sum_{k=1}^{p+1}\sum_{\ell=1}^{p+1} U_{1jk\ell}.
\end{align*}
From H\"{o}lder's inequality, 
\begin{align*}
&|U_{1jkl}|
\leq 
E\Big\{\Big|\Big(\frac{\partial\tht^R_i}{\partial\phi_j}\Big)\Big(\frac{\partial^2\tht^R_i}{\partial\phi_k^{\ast}\partial\phi_{\ell}^{\ast}}\Big)\Big|^4\Big\}^{1/4} 
E\Big\{\Big|(\phih_{\alpha(j)}-\phi_j)(\phih_{\alpha(k)}-\phi_k)(\phih_{\alpha(\ell)}-\phi_{\ell})\Big|^{4/3}\Big\}^{3/4}\\
&\ \ \ \ \ \  \leq 
E\Big(\Big|\frac{\partial\tht^R_i}{\partial\phi_j}\Big|^8\Big)^{1/8}
E\Big(\Big|\frac{\partial^2\tht^R_i}{\partial\phi_k^{\ast}\partial\phi_{\ell}^{\ast}}\Big|^8\Big)^{1/8} 
\prod_{a\in \{j,k,\ell\}}E\Big(\Big|\phih_{\alpha(a)}-\phi_a\Big|^{4}\Big)^{1/4}.
\end{align*}
Since $E\big(|\partial\tht^R_i/\partial\phi_j|^8\big)<\infty$ and $E\big(|\partial^2\tht^R_i/\partial\phi_k^{\ast}\partial\phi_{\ell}^{\ast}|^8\big)<\infty$, $R_1=o(m^{-1})$ from Lemma \ref{lem:order}.
A similar evaluation shows that $R_2=o(m^{-1})$.
Using the similar argument given in the proof of Theorem 3 in \cite{Kubokawa2016}, 
\begin{align*}
&E\Big[\Big\{\frac{\partial\tht_i^R}{\partial\phi^\T }(\phih_{\alpha}-\phi)\Big\}^2\Big]
=\tr\Big[E\Big(\frac{\partial\tht_i^R}{\partial\phi}\frac{\partial\tht_i^R}{\partial\phi^\T }\Big)
E\big\{(\phih_{\alpha}-\phi)(\phih_{\alpha}-\phi)^\T \big\}\Big]+o(m^{-1})\\
&\ \ \ \ \ 
=\frac1m\tr\Big\{E\Big(\frac{\partial \tht_i^R}{\partial \beta}\frac{\partial \tht_i^R}{\partial \beta^\T }\Big)J_{\beta}^{-1}K_{\beta}J_{\beta}^{-1}\Big\}
+\frac1mE\Big\{\Big(\frac{\partial \tht_i^R}{\partial A}\Big)^2\Big\}J_A^{-1}K_AJ_A^{-1}+o(m^{-1}).
\end{align*}
From Theorem 2 and 
\begin{align*}
&E\Big(\frac{\partial \tht_i^R}{\partial \beta}\frac{\partial \tht_i^R}{\partial \beta^\T }\Big)
=\frac{D_i^2V_i^{2\alpha}x_ix_i^\T }{(A+D_i^2)(2\alpha+1)^{3/2}}\\
&E\Big\{\Big(\frac{\partial \tht_i^R}{\partial A}\Big)^2\Big\}
=\frac{D_i^2V_i^{2\alpha}}{(A+D_i)^3(2\alpha+1)^{7/2}}\big(\alpha^4-\frac12\alpha^2+1\big),
\end{align*}
we obtain $E\{(\thh_i^R-\tht_i^R)^2\}=m^{-1}g_{3i}(A)+m^{-1}g_{4i}(A)+o(m^{-1})$.

Concerning $E\{(\tht_i^R-\th_i)(\thh_i^R-\tht_i^R)\}$, 
\begin{align*}
E\{(\tht_i^R-\th_i)(\thh_i^R-\tht_i^R)\}
&=E\{(\tht_i^R-\tht_i)(\thh_i^R-\tht_i^R)\}=\frac{D_i}{B_i}E\{(1-s_i)u_i(\thh_i^R-\tht_i^R)\}.
\end{align*}
By Taylor series expansion, 
\begin{align*}
\thh_i^R-\tht_i^R
=\frac{\partial\tht_i^R}{\partial\phi^\T }(\phih_{\alpha}-\phi)
+\frac12(\phih_{\alpha}-\phi)^\T \frac{\partial^2\tht_i^R}{\partial\phi\partial\phi^\T }(\phih_{\alpha}-\phi)+R_3,
\end{align*}
where 
$$
R_3=\frac16\sum_{k=1}^{p+1}\sum_{j=1}^{p+1}\sum_{\ell=1}^{p+1}\frac{\partial^3\tht_i^R}{\partial\phi_k^{\ast}\partial\phi_j^{\ast}\partial\phi_\ell^{\ast}}(\phih_{\alpha(k)}-\phi_k)(\phih_{\alpha(j)}-\phi_j)(\phih_{\alpha(\ell)}-\phi_\ell).
$$
Similarly to the evaluation of $R_1$, $E\{(1-s_i)u_iR_3\}=o(m^{-1})$.
Then, 
\begin{align*}
E&\{(\tht_i^R-\th_i)(\thh_i^R-\tht_i^R)\}\\
&=\frac{D_i}{B_i}E\Big\{(1-s_i)u_i\frac{\partial\tht_i^R}{\partial\phi^\T }(\phih_{\alpha}-\phi)\Big\}
+\frac{D_i}{2B_i}E\Big\{(1-s_i)u_i(\phih_{\alpha}-\phi)^\T \frac{\partial^2\tht_i^R}{\partial\phi\partial\phi^\T }(\phih_{\alpha}-\phi)\Big\}+o(m^{-1})\\
&\equiv T_1+T_2+o(m^{-1}),
\end{align*}
where
\begin{equation*}
\begin{split}
T_2&=\frac{D_i}{2mB_i}\tr\Big[E\Big\{(1-s_i)u_i\frac{\partial^2\tht_i^R}{\partial\beta\partial\beta^\T }\Big\}J_{\beta}^{-1}K_{\beta}J_{\beta}^{-1}\Big]\\
& \ \ \ +\frac{D_iK_A}{2mB_iJ_A^2}E\Big\{(1-s_i)u_i\frac{\partial^2\tht_i^R}{\partial A^2}\Big\}+o(m^{-1}).
\end{split}
\end{equation*}
From \cite{Lohr2009}, 
\begin{align*}
E(\beh_{\alpha}-\beta|y_i)&=b_{\beta}-m^{-1}B_i^{-1}J_{\beta}^{-1}x_is_iu_i+o_p(m^{-1})\\ 
E(\Ah_{\alpha}-A|y_i)&=b_A-m^{-1}J_A^{-1}\Big\{\frac{u_i^2s_i}{B_i^2}-\frac{s_i}{B_i}+\frac{\alpha V_i^{\alpha}}{(\alpha+1)^{3/2}B_i}\Big\}+o_p(m^{-1}),
\end{align*}
where $b_{\beta}=\lim_{m\to\infty}mE(\beh_{\alpha}-\beta)$ and $b_A=\lim_{m\to\infty}mE(\Ah_{\alpha}-A)$, so 
\begin{equation*}
\begin{split}
T_1&=-\frac{D_i}{mB_i^2}E\Big\{s_i(1-s_i)u_i^2\frac{\partial\tht_i^R}{\partial\beta^\T }J_{\beta}^{-1}x_i\Big\}
-\frac{D_i}{mB_i^3J_A}E\Big\{s_i(1-s_i)u_i\frac{\partial\tht_i^R}{\partial A}(u_i^2-B_i)\Big\}\\
& \ \ \ \ +\frac{D_i}{B_i}E\Big\{(1-s_i)u_i\frac{\partial\tht_i^R}{\partial A}\Big\}
\Big\{b_A-\frac{\alpha V_i^{\alpha}}{m(\alpha+1)^{3/2}B_iJ_A}\Big\}+o(m^{-1}).
\end{split}
\end{equation*}
Combining these results and using Lemma \ref{lem:moment}, $E\{(\tht_i^R-\th_i)(\thh_i^R-\tht_i^R)\}=m^{-1}g_{5i}(A)+o(m^{-1})$, which completes the proof.

\section{Additional simulation study}
We show results of additional simulation studies regarding estimation accuracy of several estimators of $\th_i$.
We use the same data generating model for $y_i$ as in Section 4$\cdot$1 in the main article. 
We consider the following additional scenarios of the true generating distribution of $u_i$:
\begin{align*}
&\text{(IV)}\ \ u_i\sim t_2, \ \ \ \ 
\text{(V)}\ \  u_i\sim {\rm Ga}(0.5,0.5),\\
&\text{(VI)} \ \ u_i\sim {\rm Ga}(2,2), \ \ \ \ \ \ 
\text{(VII)} \ \ u_i\sim {\rm ST}_3(2)/3^{1/2},
\end{align*}
where $t_n$ is a $t$-distribution with $n$ degrees of freedom,  ${\rm Ga}(a,b)$ is a gamma distribution with shape parameter $a$ and rate paramour $b$, which are scaled to have mean zero and variance $1$, and ${\rm ST}_n(a)$ denotes a skew $t$-distribution with $n$ degrees of freedom  and skewing parameter $a$ in the parametrization given in `skewt' package in ``R".
For estimating $\th_i$, we employ the same six methods used in the main article, and compute mean squared errors based on 20000 replicates.
Table \ref{tab:comp-add} reports the values of mean squared errors averaged within the same groups.
Since we found that the estimated Monte Carlo errors are negligibly small as given in Table 1 in the main article, they are not shown here.
In scenario (IV) and (VII), the generated values of $u_i$ sometimes contain outliers due to the heavy tailed properties of $t$- or skew $t$-distributions, under which the proposed methods tend to provide better performance than the other methods.
Note that the true distributions of $u_i$ in scenarios (V) and (VI) are skewed, but it would not produce extreme values.
In particular, the distribution of $u_i$ in scenario (V) is more skewed than scenario (VI), and the proposed methods tend to perform better than the other methods in scenario (V).
On the other hand, the standard empirical Bayes method performs quite well in scenario (VI) in spite of misspecification of the normality assumption, and the proposed method is comparable or slightly better than the other robust methods.

\begin{table}
\caption{Simulated mean squared errors averaged within the same group. The values are multiplied by $1000$. DEB1: density power divergence with 1\% inflation rate, DEB2: density power divergence with 5\% inflation rate, EB: empirical Bayes, REB1: robust empirical Bayes of \cite{Sinha2009}, REB2: robust Bayes of \cite{Sinha2009} with maximum likelihood, GEB: robust empirical Bayes of \cite{Ghosh2008}.
\label{tab:comp-add}}
\begin{center}
\begin{tabular}{cccccccccccccc}
 \hline
Scenario & Group & DEB1 & DEB2 & EB & REB1 & REB2 & GEB \\
\hline
& 1 & 180 & 178 & 183 & 249 & 180 & 180 \\
 & 2 & 329 & 323 & 340 & 591 & 332 & 337 \\
(IV) & 3 & 456 & 446 & 478 & 1250 & 464 & 492 \\
 & 4 & 563 & 550 & 597 & 1402 & 578 & 621 \\
 & 5 & 661 & 644 & 712 & 2004 & 691 & 732 \\
 \hline
 & 1 & 145 & 143 & 145 & 216 & 143 & 150 \\
 & 2 & 232 & 229 & 234 & 334 & 235 & 266 \\
(V) & 3 & 293 & 289 & 297 & 397 & 304 & 321 \\
 & 4 & 335 & 332 & 340 & 427 & 354 & 352 \\
 & 5 & 365 & 363 & 372 & 449 & 392 & 371 \\
 \hline
 & 1 & 156 & 156 & 154 & 183 & 154 & 157 \\
 & 2 & 249 & 249 & 247 & 296 & 250 & 293 \\
(VI) & 3 & 310 & 312 & 308 & 358 & 317 & 351 \\
 & 4 & 352 & 358 & 351 & 395 & 362 & 380 \\
 & 5 & 378 & 387 & 377 & 415 & 389 & 398 \\
 \hline
 & 1 & 163 & 161 & 164 & 231 & 162 & 163 \\
 & 2 & 279 & 274 & 284 & 509 & 280 & 311 \\
(VII) & 3 & 363 & 355 & 374 & 703 & 371 & 411 \\
 & 4 & 435 & 426 & 452 & 771 & 455 & 476 \\
 & 5 & 489 & 479 & 512 & 895 & 520 & 523 \\
 \hline
\end{tabular}
\end{center}

\vspace{0.2cm}

\end{table}


\end{document}